\keywords{Normality, Agafonov theorem, probabilistic automata}
\newcommand{\bi}{\begin{itemize}}
	\newcommand{\ei}{\end{itemize}}
\newcommand{\be}[1]{\begin{equation}\label{#1}}
	\newcommand{\ee}{\end{equation}}
\newcommand{\A}{\mathcal{A}}
\newcommand{\R}{\mathbb{R}}
\newcommand{\M}{\mathcal{M}}
\renewcommand{\S}{\mathcal{S}}
\newcommand{\G}{\mathcal{G}}
\newcommand{\PP}{\mathbb{P}}
\newcommand{\emptystr}{\epsilon}
\DeclareMathOperator{\nbocc}{NbOcc}
\DeclareMathOperator{\freq}{Freq}
\DeclareMathOperator{\bfreq}{BFreq}
\DeclareMathOperator{\select}{Select}
\DeclareMathOperator{\capital}{Capital}
\theoremstyle{plain}
\newtheorem{theorem}{Theorem}[section]
\newtheorem*{theorem*}{Theorem}
\newtheorem{lemma}[theorem]{Lemma}
\newtheorem*{lemma*}{Lemma}
\newtheorem*{proposition*}{Proposition}
\newtheorem*{corollary*}{Corollary}
\newtheorem*{claim*}{Claim}
\newtheorem*{observation*}{Observation}
\newtheorem*{conjecture*}{Conjecture}
\newtheorem{definition}[theorem]{Definition}
\newtheorem*{definition*}{Definition}
\newtheorem*{example*}{Example}
\newtheorem*{problem*}{Problem}
\newtheorem*{question*}{Question}
\newtheorem*{assumption*}{Assumption}
\newtheorem*{notation*}{Notation}
\newtheorem*{remark*}{Remark}
\newtheorem*{convention*}{Convention}
\begin{document}
		
		\title[Agafonov and Schnorr-Stimm for probabilistic automata]
		{The Agafonov and Schnorr-Stimm theorems for probabilistic automata}
		
		\thanks{Laurent Bienvenu and Subin Pulari were supported in part by the ANR project FLITTLA ANR-21-CE48-0023. Subin Pulari was supported by the Basic Research Program at HSE University and this paper was prepared in the framework of this program.}
		
		\author[L.~Bienvenu]{Laurent Bienvenu\lmcsorcid{0000-0002-9638-3362}}[a]
		\author[H.~Gimbert]{Hugo Gimbert\lmcsorcid{0000-0003-1227-9718}}[a]
		\author[S.~Pulari]{Subin Pulari\lmcsorcid{0000-0001-8426-4326}}[b]
		
		\address{LaBRI, CNRS \& Université de Bordeaux, France}
		\email{laurent.bienvenu@labri.fr, hugo.gimbert@labri.fr}
		
		\address{National Research University Higher School of Economics Moscow}
		\email{spulari@hse.ru}
		
		

\begin{abstract}
  \noindent For a fixed alphabet~$A$, an infinite sequence~$X$ is said to be normal if every word $w$ over~$A$ appears in~$X$ with the same frequency as any other word of the same length. A classical result of Agafonov~(1968) relates normality to finite automata as follows: a sequence~$X$ is normal if and only if any subsequence of~$X$ selected by a finite automaton is itself normal. Another theorem of Schnorr and Stimm~(1972) gives an alternative characterization: a sequence~$X$ is normal if and only if no gambler can win large amounts of money by betting on the sequence~$X$ using a strategy that can be described by a finite automaton. Both of these theorems are established in the setting of deterministic finite automata. This raises the question as to whether they can be extended to the setting of probabilistic finite automata. In the case of the Agafonov theorem, a partial positive answer was given by L\'echine et al.\ (2024) in the restricted case of probabilistic automata with rational transition probabilities. 
  
  In this paper, we settle the full conjecture by proving that both the Agafonov and the Schnorr-Stimm theorems hold true for arbitrary probabilistic automata. Specifically, we show that a sequence~$X$ is normal if and only if any probabilistic automaton selects a normal subsequence of~$X$ with probability~$1$ and also show that a sequence~$X$ is normal if and only if any probabilistic finite-state gambler fails to win on~$X$ with probability $1$. 
\end{abstract}

\maketitle

\section{Introduction}

Given a finite alphabet~$A$ of $k$ letters, an infinite sequence~$X$ of letters is said to be \emph{normal} if every word~$w$ of $A^*$ appear as subword of~$X$ with the same frequency as any other word of the same length, namely, $(1/k)^{|w|}$. The famous Champernowne sequence 
\[
01234567891011121314151617181920212223242526\ldots
\]
can be shown to be normal over the alphabet $A=\{0, \ldots, 9\}$. The number~$\pi$, for example, is conjectured to have a normal expansion in every base, though this very much remains an open question. Normal sequences are plentiful, and an easy way to generate a normal sequence~$X$ is to draw each letter $X(n)$ at random in the alphabet $A$ (all letters having the same probability $1 / |A|$) independently of the other chosen letters $X(m)$. The law of large numbers tells us that we obtain a normal sequence with probability~$1$. \\

Of course, there are also plenty of examples of non-normal sequences: \\
\begin{itemize}
	\item Periodic, or ultimately periodic, sequences can never be normal: indeed if the period of~$X$ is~$k$ there are only $k$ possible subwords of~$X$ of length~$k$ hence most words of length~$k$ will not appear and their frequency will be~$0$. 
	\item Sturmian sequences, which are sequences with only $k+1$ different subwords of length~$k$ (such as the Fibonacci sequence 100101001001010010100100101001001... obtained by iterating the morphism $0 \mapsto 01$ and $1 \mapsto 0$) are not normal for the same reason. 
	\item The Thue-Morse sequence $01101001100101101001011001101001...$ obtained by iterating the morphism $0 \mapsto 01$ and $1 \mapsto 10$ is not normal because $000$ and $111$ do not appear as subwords. 
	\item A sequence of $0$'s and $1$'s generated at random where each bit is chosen equal to the previous one with probability $2/3$ will have, with probability~$1$, all possible finite words as subwords but will not (still with probability $1$) be normal as for example the word $00$ will appear with frequency $1/3$ instead of $1/4$. 
\end{itemize}

It turns out that normality has a nice interpretation in terms of finite automata. Indeed, two classical results, one due to Agafonov \cite{Agafonov1968} and the other due to Schnorr and Stimm \cite{SchnorrS1972}, assert that an infinite sequence is normal if and only if it cannot be predicted by a finite automaton with better-than-average accuracy. Of course, one needs to specify what `predicted' means. We consider two prediction models. 

\begin{itemize}
	\item[(I)] In the Agafonov model, an automaton reads the infinite sequence~$X$ one letter at a time and updates its state in the usual way. Some of its states have a `select' tag on them. When the current state has such a tag, the next letter will be selected and added to a subsequence~$Y$. We consider the automaton successful at predicting~$X$ if the subsequence~$Y$ built in this process is infinite and some letter of $A$ does not have asymptotic frequency $1/|A|$ in~$Y$. This means that the automaton has exhibited a statistical anomaly in the sequence~$X$ and isolated this anomaly in the subsequence~$Y$. 
	
	\item[(II)] In the Schnorr-Stimm model, the predictor is still an automaton but this time is viewed as a gambling strategy. The gambler starts with a capital of $\$1$. Each state~$q$ is labeled with a betting function $\gamma_q: A \rightarrow \R^{\geq 0}$. This function represents the amount by which the predictor would like her capital to be multiplied by depending on the value of the next bit. For example, suppose the player plays against a sequence $X \in \{a,b,c\}^\omega$. If her current capital is $\$2$ and  the current state~$q$ is labelled by a betting function $\gamma_q$ such that $\gamma_q(a)=0.7$, $\gamma_q(b)=1.1$ and $\gamma(c)=1.2$, if the next letter is $a$, her new capital will be $\$1.4$, if it is $b$ her new capital will be $\$ 2.2$ and if it is~$c$ her new capital will be $\$2.4$. For the game to be fair, each betting function $\gamma_q$ must have expectation equal to~$1$, i.e., must satisfy $\frac{1}{|A|}\sum_{a \in A} \gamma_q(a) = 1$. We say that the predictor wins if the capital of the player takes arbitrarily large values throughout the (infinite) game. That is, the predictor has spotted some type of statistical anomaly and is exploiting it to get rich!
\end{itemize}

Both of these models lead to the same conclusion: an infinite sequence $X$ is normal if and only if it is unpredictable (by a finite automaton). 

\begin{theorem}[Agafonov~\cite{Agafonov1968}]
	For $X \in A^\omega$, the following are equivalent.
	\item[(i)] $X$ is normal. 
	\item[(ii)] For any automaton $\A$ that selects a subsequence~$Y$ of~$X$ as in model~I, either~$Y$ is finite, or every letter of $A$ appears in~$Y$ with asymptotic frequency $1/|A|$.
	\item[(iii)] For any automaton $\A$ that selects a subsequence~$Y$ of~$X$ as in model~I, either~$Y$ is finite, or $Y$ is normal. 
\end{theorem}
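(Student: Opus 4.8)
The plan is to prove the Agafonov theorem by establishing the cycle of implications $(i) \Rightarrow (iii) \Rightarrow (ii) \Rightarrow (i)$, which is the natural direction since $(iii) \Rightarrow (ii)$ is immediate (normality of $Y$ implies in particular that each single letter has frequency $1/|A|$). The genuinely hard implications are $(i) \Rightarrow (iii)$, namely that any deterministic automaton selects a normal subsequence from a normal sequence, and the converse $(ii) \Rightarrow (i)$.

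For $(i) \Rightarrow (iii)$, the key structural idea is to exploit the finite-state nature of $\mathcal{A}$ by reducing the selection process to the behavior on each state separately. I would first observe that a selected position in $X$ is one where the automaton sits in a selecting state, and that the automaton's run on a normal sequence visits states in a predictable way. The crucial technical fact is that reading a normal sequence through a finite automaton produces, for each state $q$, a subsequence of letters read from $q$ that is itself normal; this is essentially a statement that normality is preserved under the ``selection by state'' operation. One standard route is to prove this for the specific one-state-filter selections and then patch together the contributions of the various selecting states, using that the frequency of length-$\ell$ words in $X$ can be recovered as a weighted average of the frequencies along each state-run. The analysis typically passes through the notion of the frequency with which the automaton is in each state, which exists because the state sequence is driven by a normal (hence statistically ``generic'') input; a counting/pigeonhole argument on finite prefixes controls the error terms.

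For the reverse direction $(ii) \Rightarrow (i)$, the contrapositive is the clean approach: assuming $X$ is not normal, I would exhibit a single automaton in model~I that selects an infinite subsequence $Y$ in which some letter fails to have frequency $1/|A|$. Non-normality of $X$ means there is some word $w$ whose frequency deviates from $(1/k)^{|w|}$; I would design an automaton that uses its states to track the last $|w|-1$ letters read (a de Bruijn-type construction on windows of length $|w|-1$) and selects exactly those positions immediately following an occurrence of the appropriate prefix. This isolates the statistical anomaly: the selected subsequence records precisely the letters that witness the skewed frequency of $w$, so some letter appears in $Y$ with frequency bounded away from $1/|A|$. The subsequence is infinite because the relevant prefix occurs with positive frequency (one must handle the degenerate case where even the prefix has vanishing frequency, which itself already certifies a length-$(|w|-1)$ anomaly, allowing an inductive reduction to shorter words).

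The main obstacle I expect is the forward direction $(i) \Rightarrow (iii)$, specifically controlling the frequencies uniformly across all length-$\ell$ words in the selected subsequence $Y$ rather than merely for single letters. The difficulty is that the selecting states interact: the subsequence $Y$ interleaves letters selected from different states, and one must argue that the joint statistics of consecutive selected letters remain those of a normal sequence. The cleanest way to surmount this is to reduce selection of length-$\ell$ blocks in $Y$ to selection of single symbols over the enlarged alphabet $A^\ell$, or equivalently to show that the composition of a selection automaton with a ``block-reading'' transducer is again a selection process to which a single-letter normality-preservation lemma applies. Making this reduction rigorous, while tracking that infinitely many selections occur and that the averaged error terms vanish, is where the bulk of the careful work lies.
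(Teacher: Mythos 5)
A point of reference first: the paper does not prove this theorem at all. It is stated as classical background (Agafonov 1968), with pointers to modern accounts (Carton, Seiller--Simonsen), and the paper's own contributions build on a cited strengthening of it, the Bernoulli-measure version (Theorem~\ref{thm:agafonov-bernoulli}). So your proposal has to be judged on its own merits and against the machinery the paper invokes elsewhere. Your easy direction $(iii) \Rightarrow (ii)$ is fine, and your $(ii) \Rightarrow (i)$ is correct and standard: take a \emph{minimal} word $w=ux$ whose frequency misbehaves, build a window automaton remembering the last $|u|$ letters, and select the letter following each occurrence of $u$. This is essentially the same construction the paper uses, in gambling form, for direction $(ii)\Rightarrow(i)$ of Theorem~\ref{thm:schnorr-stimm-bernoulli}: minimality guarantees that $u$ has limiting frequency $|A|^{-|u|}>0$, so the selection is infinite, and the selected letters reproduce the anomaly of $w$.

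The genuine gap is in $(i) \Rightarrow (iii)$. Your ``crucial technical fact'' --- that for each state $q$ the letters read while in state $q$ form a normal sequence --- is not a lemma you can build on; it \emph{is} the statement $(i)\Rightarrow(iii)$ for singleton selectors, i.e., the theorem itself in its essential case, and nothing in your sketch proves it. Two nontrivial things are being assumed. First, that state-visit frequencies exist along the run on a normal input: the paper needs Lemma~\ref{lem:ergodiconnormal} for this, which rests on the Markov-chain ergodic theorem together with the equivalence of normality and block normality (Section~\ref{sec:blocknormality}). Second, and more seriously, even granting those frequencies, balance of the selected subsequence requires showing that the asymptotic joint frequency of the event ``current state is $q$ and next letter is $a$'' factorizes as $\pi_q \cdot \frac{1}{|A|}$ --- i.e., that the input letter is asymptotically independent of the automaton's state. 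This is the actual content of Agafonov's theorem, and it is where every known proof (Agafonov's original, Schnorr--Stimm's Lemma 4.5, Carton's, Seiller--Simonsen's, and the paper's own Section~\ref{sec:agafonovalternateproof} in the probabilistic setting) spends its effort, typically by cutting $X$ into long blocks and exploiting block statistics; your ``counting/pigeonhole argument on finite prefixes'' names the place where this must happen but supplies no mechanism. By contrast, your final reduction --- recovering length-$\ell$ word frequencies in $Y$ from single-letter balance of subsequences selected by the automaton augmented with a memory of the last $\ell-1$ \emph{selected} letters --- is a legitimate and standard bootstrap showing that (ii) for all selectors implies (iii) for all selectors. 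So the one irreducible missing piece is precisely $(i)\Rightarrow(ii)$, and the proposal does not contain a proof of it.
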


(See also Carton~\cite{Carton2020} and Seiller and Simonsen~\cite{SeillerS2020} for a modern account of the above theorem).

\begin{theorem}[Schnorr-Stimm~\cite{SchnorrS1972}]
	For $X \in A^\omega$, the following are equivalent.
	\item[(i)] $X$ is normal. 
	\item[(ii)] Any automaton $\A$ betting on~$X$ according to model~II does not win. 
\end{theorem}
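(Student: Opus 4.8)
The plan is to prove both implications, freely using the Agafonov theorem just established. Write $k=|A|$, and for a run of an automaton $\A$ on $X$ let $q_i$ denote the state reached after reading $X(0)\cdots X(i-1)$, so that the capital after $n$ steps is $\prod_{i<n}\gamma_{q_i}(X(i))$ and its logarithm is $\sum_{i<n}\log\gamma_{q_i}(X(i))$.

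For the direction (i)$\Rightarrow$(ii), suppose $X$ is normal and fix a gambling automaton $\A$. First I would group the log-capital by state: $\log\mathrm{capital}_n=\sum_{q}S_q(n)$ where $S_q(n)=\sum_{i<n,\,q_i=q}\log\gamma_q(X(i))$. The key observation is that tagging the single state $q$ as ``select'' turns $\A$ into a selector in the sense of model~I, and the subsequence $Y_q$ of letters it reads at $q$ is exactly $\{X(i):q_i=q\}$; by Agafonov, $Y_q$ is either finite or normal. States visited only finitely often contribute a bounded constant to the log-capital, so I focus on states $q$ visited infinitely often. For such $q$, normality of $Y_q$ gives, for each letter $a$, that its count among the first $N_q(n)$ visits is $N_q(n)/k+o(N_q(n))$, whence $S_q(n)=N_q(n)\,c_q+o(N_q(n))$ with $c_q=\tfrac1k\sum_{a\in A}\log\gamma_q(a)$. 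By the AM--GM inequality and the fairness constraint $\tfrac1k\sum_a\gamma_q(a)=1$ we have $c_q\le 0$.

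Here lies the main obstacle: $c_q\le 0$ only yields $\log\mathrm{capital}_n\le o(n)$, i.e.\ subexponential growth, which is weaker than the boundedness required to rule out a win. The fix is the equality case of AM--GM: $c_q=0$ forces $\gamma_q(a)=1$ for all $a$, so a neutral state contributes $S_q(n)\equiv 0$ \emph{exactly}, with no error term. Consequently each infinitely visited state either satisfies $c_q<0$, in which case $S_q(n)\to-\infty$, or bets trivially and contributes $0$. Summing over the finitely many states, $\log\mathrm{capital}_n$ is bounded above, so the capital is bounded and $\A$ does not win.

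For the converse I would prove the contrapositive: if $X$ is not normal, build a winning automaton. Invoking the classical equivalence that $X$ is normal iff for every $\ell$ the aligned blocks $X[0,\ell),X[\ell,2\ell),\dots$ are equidistributed over $A^\ell$, non-normality yields an $\ell$ and a block $b\in A^\ell$ whose aligned block-frequency has $\limsup>k^{-\ell}$ (or, symmetrically, $\liminf<k^{-\ell}$). I would then construct an automaton with a counter modulo $\ell$ to align blocks, betting within each block so as to multiply the capital by $\Gamma(\mathrm{block})$, where over the super-alphabet $A^\ell$ one sets $\Gamma(b)=1+\delta$ and $\Gamma(b')=1-\delta/(k^\ell-1)$ for $b'\ne b$. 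The super-alphabet bet $\Gamma$ is fair, i.e.\ $k^{-\ell}\sum_{b'}\Gamma(b')=1$, and factors into fair per-letter bets by the martingale chain rule: letting $\mu(b')=k^{-\ell}\Gamma(b')$ be the associated probability measure on $A^\ell$, the per-letter multiplier $k\,\mu(a\mid\text{prefix})$ telescopes to $\Gamma(b')$ and satisfies $\tfrac1k\sum_a k\,\mu(a\mid\text{prefix})=1$. The log-capital after $N$ blocks then equals $(\text{count of }b)\log(1+\delta)+(N-\text{count})\log(1-\delta/(k^\ell-1))$, whose growth rate has derivative $(k^\ell f-1)/(k^\ell-1)$ at $\delta=0$, where $f$ is the empirical block-frequency of $b$. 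Choosing $\delta$ small and positive makes this rate strictly positive along the subsequence where $f\to\limsup>k^{-\ell}$, so the capital takes arbitrarily large values and $\A$ wins; the case $\liminf<k^{-\ell}$ is handled by betting against $b$ instead. I expect the boundedness upgrade in the first direction to be the genuinely delicate point, since it is exactly where naive estimates fall short and the equality case of AM--GM must be used, whereas the remaining ingredients (the per-state selection via Agafonov, the alignment counter, and the chain-rule factorization of block bets) are routine once set up.
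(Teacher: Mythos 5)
Your proof is correct, and it splits into one half that mirrors the paper and one that genuinely does not. For (i)$\Rightarrow$(ii) you follow essentially the paper's argument (given there for the Bernoulli generalization, Theorem~\ref{thm:schnorr-stimm-bernoulli}, of which this statement is the uniform case): decompose the log-capital state by state, turn each state $q$ into a one-state selector, invoke Agafonov to get that the letters read at $q$ form a finite or balanced subsequence, and conclude by concavity of $\log$ --- your ``equality case of AM--GM'' is exactly the paper's strict Jensen inequality. The only detail you skip is the case $\gamma_q(a)=0$, where your asymptotics involve $\log 0$; one should simply observe that the capital then collapses to $0$ and stays there (the paper notes this parenthetically). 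For (ii)$\Rightarrow$(i) you take a different route. The paper stays with sliding frequencies: it picks a minimal word $w=ux$ whose frequency fails to converge, extracts a cluster point $\nu\neq\mu$ of the empirical next-letter distributions after the context $u$, and has the gambler memorize the last $|u|$ letters and bet the likelihood ratio $\nu(a)/\mu(a)$ at context $u$, winning infinitely often at an exponential rate by positivity of $\mathrm{D}_{KL}(\nu\|\mu)$. You instead invoke the classical equivalence of normality with aligned-block equidistribution and tilt a fair bet on one overrepresented (or underrepresented) block $b$ by a small $\delta$, winning by the sign of a derivative at $\delta=0$; your chain-rule factorization into per-letter fair bets is correct, and your automaton (prefixes of the current aligned block as states) is finite. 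Both arguments work and both deliver the infinitely-often exponential rate needed later for the dichotomy theorem. The trade-off: your gambler and its analysis are more elementary (no cluster points, no KL divergence), but you import the normal $\Leftrightarrow$ block-normal equivalence, which is precisely the nontrivial counting result the paper proves separately as Lemma~\ref{lem:blocknormalimpliesnormal} (Niven--Zuckerman/Pillai-style), whereas the paper's cluster-point construction avoids that equivalence in this proof (needing it only later, for Lemma~\ref{lem:ergodiconnormal}) and generalizes verbatim to arbitrary positive Bernoulli measures, which is what the rest of the paper requires; your block construction would also adapt to that setting, with the tilted bet made fair with respect to $\mu$ rather than the uniform measure.
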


In both of these theorems, the finite automata used for prediction are assumed to be deterministic. Would the situation change if one allowed probabilistic automata? In principle, one would not expect an unpredictable sequence to become predictable in the presence of a random source. Indeed, given a sequence~$X$ and a random source~$R$ it seems, informally speaking, that almost surely~$R$ will not `know' anything about~$X$ and thus will not help in predicting~$X$. Surprisingly, this intuition is wrong in the setting where the predictors are not finite automata but are Turing machines, as shown by Bienvenu et al.~\cite{BienvenuDS2022} who built a sequence that is unpredictable by deterministic Turing machines (in either prediction model of selection or gambling) and becomes predictable (in either model) if one allows probabilistic Turing machines. Nonetheless, finite automata are much weaker than Turing machines and Bienvenu et al.'s construction does not work for such a memoryless model of computation. Indeed, recently, Léchine et al.\ showed that Agafonov's theorems holds for probabilistic automata in the restricted case where the transition probabilities are rational. 

\begin{theorem}[Léchine et al.~\cite{LechineSS2024}]
	For $X \in A^\omega$, the following are equivalent.
	\item[(i)] $X$ is normal. 
	\item[(ii)] For any probabilistic automaton $\A$ with rational probabilities, almost surely, $\A$ selects a subsequence~$Y$ of~$X$ that is either finite, or every letter of $A$ appears in~$Y$ with asymptotic frequency $1/|A|$.
	\item[(iii)] For any probabilistic automaton $\A$ with rational probabilities, almost surely, $\A$ selects a subsequence~$Y$ of~$X$ such that either~$Y$ is finite, or $Y$ is normal. 
\end{theorem}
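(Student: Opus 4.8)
The plan is to \emph{derandomize} the probabilistic automaton and reduce the statement to the classical (deterministic) Agafonov theorem. Since all transition probabilities of $\A$ are rational, fix a common denominator $N$ so that every transition probability is an integer multiple of $1/N$. I would model the internal randomness of $\A$ by an auxiliary sequence $R \in \{0,\dots,N-1\}^\omega$ drawn from the uniform Bernoulli measure, consuming exactly one symbol $R(n)$ per step: in state $q$ reading letter $a$, partition $\{0,\dots,N-1\}$ into consecutive blocks of sizes $N\cdot p(q,a,q')$ (these are nonnegative integers summing to $N$) and let $R(n)$ select the target state. This turns $\A$ into a \emph{deterministic} automaton $\A'$ over the product alphabet $B := A \times \{0,\dots,N-1\}$, reading the combined sequence $Z$ with $Z(n) := (X(n), R(n))$, keeping the same selecting states, and whose run on $Z$ coincides with a run of $\A$ on $X$ with internal choices $R$. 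In particular $\A'$ selects position $n$ exactly when $\A$ does, so the subsequence $Y'$ extracted by $\A'$ from $Z$ projects, under the first coordinate $\pi_1$, onto the subsequence $Y$ extracted by $\A$ from $X$; since $Y$ and $Y'$ have equal length, one is finite iff the other is.

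The heart of the argument is the following \emph{joint normality} lemma: if $X$ is normal, then with probability $1$ (over $R$) the combined sequence $Z$ is normal over $B$. Granting it, I would finish as follows. Fix any realization of $R$ for which $Z$ is normal. Applying the classical Agafonov theorem (in the form (i)$\Rightarrow$(iii)) to the deterministic automaton $\A'$ on the normal sequence $Z$, the selected subsequence $Y'$ is either finite or normal over $B$. The first-coordinate projection of a sequence that is normal over $B = A \times \{0,\dots,N-1\}$ is then normal over $A$: for $u \in A^m$,
\[
\freq_{\pi_1(Y')}(u) \;=\; \sum_{v \in \{0,\dots,N-1\}^m} \freq_{Y'}(u \otimes v) \;=\; N^m \cdot \Big(\tfrac{1}{N|A|}\Big)^m \;=\; \Big(\tfrac{1}{|A|}\Big)^m,
\]
where $u \otimes v$ denotes the interleaved word over $B$. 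Hence $Y = \pi_1(Y')$ is finite or normal, and as this holds for almost every $R$ we obtain implication (i)$\Rightarrow$(iii); implication (i)$\Rightarrow$(ii) follows since normality of $Y$ entails equidistribution of single letters.

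For the two converse implications no new work is needed: a deterministic automaton is a degenerate probabilistic automaton whose transition probabilities all lie in $\{0,1\}$, hence are rational, and for such an automaton ``almost surely'' means ``always.'' Thus if $X$ is not normal, the contrapositive of the classical Agafonov theorem supplies a deterministic automaton selecting an infinite subsequence that is not normal (respectively, in which some letter is not equidistributed); viewed as a probabilistic automaton with rational probabilities it witnesses the failure of (iii) (respectively (ii)), giving (iii)$\Rightarrow$(i) and (ii)$\Rightarrow$(i).

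The main obstacle is the joint normality lemma, where the difficulty is the statistical dependence created by overlapping length-$m$ windows. I would handle it by a residue-class decomposition: fix a word $u \otimes v \in B^m$ and split positions according to $n \bmod m$, so that within each class the windows $[n, n+m-1]$ are pairwise disjoint and the events ``$R$ matches $v$ on this window'' become independent Bernoulli$((1/N)^m)$ trials. Counting the $X$-matches of $u$ is deterministic and controlled by the normality of $X$, so within each residue class the strong law of large numbers forces the joint matches to have density $(1/N)^m$ times the density of $X$-matches; summing over the finitely many residue classes yields total frequency $(1/(N|A|))^m$ with probability $1$, and a union bound over the countably many words $u \otimes v$ completes the lemma.
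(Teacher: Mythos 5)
Your proof is correct, and its skeleton --- derandomize the automaton into a deterministic selector over a product alphabet, prove a joint-normality lemma for $X \otimes R$, invoke a deterministic Agafonov theorem, then project back --- is exactly the reduction strategy this paper uses for its generalization of the statement to arbitrary PFAs (Theorem~\ref{thm:agafonov-pfa}). The genuine difference is which deterministic theorem is needed. By exploiting rationality you encode the automaton's randomness as a \emph{uniform} i.i.d.\ sequence over $\{0,\dots,N-1\}$, so the combined sequence $Z$ is normal in the classical uniform sense and the classical Agafonov theorem suffices; this is essentially how L\'echine et al.\ themselves reduce the rational case to the deterministic one. The paper instead encodes the randomness as an i.i.d.\ sequence over the alphabet $\mathcal{T}$ of transition functions, distributed according to a generally \emph{non-uniform} Bernoulli measure $\tau$, which forces it to invoke the Seiller--Simonsen extension of Agafonov's theorem to Bernoulli measures (Theorem~\ref{thm:agafonov-bernoulli}); in exchange, that argument covers irrational transition probabilities, where your uniform encoding is unavailable (no finite uniform alphabet can simulate an irrational coin). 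A second, smaller difference: you prove the joint-normality lemma by splitting positions into residue classes modulo $m$ so that the windows are disjoint and the strong law of large numbers applies directly to the positions where $X$ matches $u$, whereas the paper's Lemma~\ref{lem:random-join} chops the sequence into long blocks and controls the overlapping occurrences by explicit upper and lower counting bounds; both are sound, and yours is arguably the more elementary route for this fixed-word-length statement, while the paper's block argument is what generalizes verbatim to the non-uniform measure $\tau$.
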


This led them to conjecture that the probabilistic version of Agafonov's theorem holds in the general case, where the probabilities are real-valued rather than being rational. In this paper, we prove this conjecture and also prove the probabilistic version of the Schnorr-Stimm theorem. Additionally, we establish a probabilistic version of the Schnorr-Stimm dichotomy regarding the winning rates of probabilistic gamblers.

Léchine et al.'s proof is a reduction of the rational probabilistic case to the deterministic case by some clever combinatorial reduction (which substantially increases the number of states). We propose two different proofs of the general theorem (removing the `rational' assumption). Our first one is also a reduction to the deterministic case but is of very different nature: Instead of encoding a probabilistic automaton into a deterministic one, we will use the fact that a probabilistic automaton can be seen as a deterministic one over a larger alphabet, without altering the number of states. For this, we will need an extension of the deterministic case to Bernoulli measures (an extension which was proved by Seiller and Simonsen~\cite{SeillerS2020}), which will be presented in the next section. Our second proof is more direct (at the price of being more technical), i.e., it does analyze the probabilistic process coming out of the run of a probabilistic automaton on a given sequence and argues that this process behaves `as expected' when~$X$ is normal. We believe that both proofs are informative which is why we chose to include both.\\

\subsection*{Notation and terminology}

We finish this introduction by formalizing the concepts discussed so far and gathering the notation and terminology that will be used in the rest of the paper. \\

Given an alphabet $A$, we denote by $A^*$ the set of finite words over $A$, by $A^n$ the set of words of length~$n$, by $A^\omega$ the set of infinite sequences of letters and by $A^{\leq \omega}$ the set $A^* \cup A^\omega$. For $X \in A^{\leq \omega}$, we denote by $X(i)$ the $i$-th letter of $X$ (by convention there is a $0$-th letter) and by $X[i,j]$ the word $X(i)X(i+1)\ldots X(j)$. Let $\emptystr$ denote the empty word.\\

Given a word~$u$ of length~$k$ and a word~$w$ of length $n \geq k$, we denote by $\nbocc(u,w)$ the number of occurrences of the word~$u$ in~$w$, i.e.,
\[
\nbocc(u,w) = \# \{i : 0 \leq i \leq n-k, \ w[i,i+k-1]=u\}
\]
and the frequency of occurrence $\freq(u,w)$ of~$u$ in~$w$ is naturally defined by 
\[
\freq(u,w) = \frac{\nbocc(u,w)}{n-k+1}
\]
When~$X$ is an infinite sequence, we define
\[
\freq^-(u,X) = \liminf_{n \rightarrow \infty}\, \freq(u,X[0,n]) ~ ~ \text{and} ~ ~ \freq^+(u,X) = \limsup_{n \rightarrow \infty}\,  \freq(u,X[0,n])
\]
When $\freq^-(u,X)$ and $\freq^+(u,X)$ have the same value, we simply call this common value $\freq(u,X)$. \\

Given $X \in A^\omega$, we say that $X$ is \emph{balanced} if all letters appear in~$X$ with the expected frequency, i.e., $\freq(a,X)=1/|A|$ for all~$a \in A$. We say that $X$ is \emph{normal} if all words appear in $X$ with the expected frequency, i.e., $\freq(w,X)=|A|^{-|w|}$ for all $w \in A^*$. \\

A \emph{deterministic finite automaton (DFA)} is a tuple $(Q,A,q_I,\delta)$ where $Q$ is a finite set of states, $A$ a finite alphabet, $q_I$ the initial state and $\delta: Q \times 	A \rightarrow Q$ the transition function (in this paper, runs of automata are meant to be infinite hence there is no need for final states). We denote by $\delta^*$ the function from $Q \times A^*$ to $Q$ defined inductively by $\delta^*(q,\emptystr)=q$ and for $w \in A^*$ and $a \in A$, $\delta^*(q,w \cdot a) = \delta(\delta^*(q,w),a)$, where $\cdot$ is the concatenation of words. 

An \emph{automatic selector} (or \emph{selector} for short) is a tuple $(Q,A,q_I,\delta,S)$ where $(Q,A,q_I,\delta)$ is a DFA and $S$ is a subset of $Q$, representing the selection states. 

Given a selector $\S=(Q,A,q_I,\delta,S)$, we define the selection function from $A^*$ to $A^*$ inductively by:
\[
\select(\S,\emptystr)=\emptystr
\]
and for $w \in A^*$ and $a \in A$:
\[
\select(\S,w \cdot a) = \left \{
\begin{array}{lr}
	\select(\S,w) \cdot a &  \text{ if } \delta^*(q_I,w) \in S\\
	\select(\S,w) &  \text{ if } \delta^*(q_I,w) \notin S\\
\end{array}
\right.
\]

If $X$ is an infinite sequence in $A^\omega$, the sequence of words $\select(\S,X[0,n])$ is non-decreasing with respect to the prefix order and thus converges to a sequence $Y \in A^{\leq \omega}$ which we call \emph{the selected subsequence of $X$ selected by $\S$} and denote by $\select(\S,X)$. \\

An \emph{automatic gambler} (or \emph{gambler} for short) is a tuple $(Q,A,q_I,\delta,\gamma)$ where $(Q,A,q_I,\delta)$ is a DFA and $\gamma$ is a function from $Q \times A$ to $\R^{\geq 0}$ such that for all~$q$, $\frac{1}{|A|}\sum_{a \in A} \gamma(q,a)=1$. As said earlier, the value of $\gamma(q,a)$ should be interpreted as the multiplier by which the gambler, being currently in state~$q$, would like her capital to be multiplied by if the next read letter is~$a$. The condition $\frac{1}{|A|} \sum_{a \in A} \gamma(q,a)=1$ ensures that the game is fair. \\

Given a gambler $\G=(Q,A,q_I,\delta,\gamma)$ and $w \in A^*$, we define $\capital(\G,w)$ inductively by
\[
\capital(\G,\emptystr)=1
\] 
and for $w \in A^*$ and $a \in A$,
\[
\capital(\G,w \cdot a)=\capital(\G,w) \cdot \gamma(\delta^*(q_I,w),a)
\]
and we say that a gambler $\G$ \emph{wins against $X \in A^\omega$} if 
\[
\limsup_{n \rightarrow +\infty}\,  \capital(\G,X[0,n]) = +\infty
\]
(otherwise we say that $\G$ loses).

\section{Deterministic prediction for Bernoulli measures}

In classical normality, all letters of the alphabet occur with the same frequency. We can however consider the extension of normality to Bernoulli measures. A Bernoulli measure over $A^\omega$ is a probability measure where letters of an infinite sequence~$X$ are drawn at random independently of one another but the distribution over the alphabet $A$ is non-uniform. 

\begin{definition}
	Let  $\mu: A \rightarrow [0,1]$ be a distribution over the alphabet $A$ (hence satisfies $\sum_{a \in A} \mu(a) = 1$). The Bernoulli measure induced by~$\mu$, which we also denote by $\mu$ by abuse of notation, is the unique probability measure such that for all~$i,k$, for every word $w=a_0, \ldots a_k \in A$,
	\[
	\Pr_{X \sim \mu} \Big[ X[i,i+k]=w \Big]= \prod_{j=0}^k \mu(a_{j})
	\]
	We also denote by $\mu(w)$ the quantity $\prod_{j=0}^k \mu(a_{j})$.    
\end{definition}

Normality generalizes very naturally to Bernoulli measures.

\begin{definition}
	Let $\mu$ be a Bernoulli measure. A sequence~$X \in A^\omega$ is \emph{$\mu$-balanced} if $\freq(a,X)=\mu(a)$ for all~$a \in A$. It is \emph{$\mu$-normal} if for all words~$w \in A^*$, $\freq(w,X)=\mu(w)$.
\end{definition}

We say that a Bernoulli measure~$\mu$ is \emph{positive} when $\mu(a)>0$ for every letter~$a$. In the rest of the paper, all Bernoulli measures will be assumed to be positive, and we simply say `Bernoulli measure' to mean `positive Bernoulli measure'.

The Agafonov theorem can be extended to Bernoulli measures, as proven by Seiller and Simonsen~\cite{SeillerS2020}. It is this theorem that we will use in the next section to obtain a proof of the Agafonov theorem for probabilistic selectors.

\begin{theorem}[Agafonov theorem for Bernoulli measures~\cite{SeillerS2020}] \label{thm:agafonov-bernoulli}
	Let~$\mu$ be a Bernoulli measure. For $X \in A^\omega$, the following are equivalent.
	\item[(i)] $X$ is $\mu$-normal. 
	\item[(ii)] For any selector $\S$ that selects a subsequence~$Y$ of~$X$, either~$Y$ is finite or $Y$ is $\mu$-balanced.
	\item[(iii)] For any selector $\S$ that selects a subsequence~$Y$ of~$X$, either~$Y$ is finite or $Y$ is $\mu$-normal.
\end{theorem}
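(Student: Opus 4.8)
The plan is to prove this equivalence for Bernoulli measures by reducing it to the classical (uniform) Agafonov theorem via a measure-preserving recoding of the alphabet, or, if that fails to be clean, to mimic the combinatorial structure of the classical proof directly. Let me think about both routes.

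The plan is to prove all three items equivalent by establishing the cycle (i)$\Rightarrow$(ii)$\Rightarrow$(iii)$\Rightarrow$(ii)$\Rightarrow$(i), isolating the one genuinely combinatorial step and dispatching the rest by soft closure arguments. The implication (iii)$\Rightarrow$(ii) is immediate, since $\mu$-normality of $Y$ forces $\mu$-balancedness. I would treat (ii)$\Rightarrow$(iii) structurally, using that selectors are closed under \emph{composition}: running a second selector $\mathcal{T}$ on $Y=\select(\S,X)$ while $Y$ is itself being produced by $\S$ can be simulated by a single selector on $X$ with state space $Q_\S\times Q_{\mathcal{T}}$, which marks position $i$ exactly when $\S$ marks it and $\mathcal{T}$, fed the letters selected so far, sits in a selecting state. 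Granting (ii), every such sub-selection of $Y$ is $\mu$-balanced. To deduce $\mu$-normality of $Y$, fix a word $v=v_0\cdots v_{m-1}$ and consider the sliding-window selector on $Y$ that marks a position iff the preceding $m-1$ letters equal $v_0\cdots v_{m-2}$; the letter-$v_{m-1}$ frequency in its selected subsequence is the conditional frequency of $v_{m-1}$ after $v_0\cdots v_{m-2}$, so balancedness gives this value as $\mu(v_{m-1})$, and an induction on $m$ yields $\freq(v,Y)=\mu(v)$. Thus (ii)$\Leftrightarrow$(iii), and it remains to prove (i)$\Rightarrow$(ii) and (ii)$\Rightarrow$(i).

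The heart of the argument is (i)$\Rightarrow$(ii): if $X$ is $\mu$-normal then every infinite selected subsequence is $\mu$-balanced. Fix $\S=(Q,A,q_I,\delta,S)$ and a letter $a$. For a state $q$ the set $L_q=\{w:\delta^*(q_I,w)=q\}$ is regular, and position $i$ contributes $X(i)$ to $Y$ precisely when $X[0,i-1]\in L_q$ for some $q\in S$. The key lemma I would prove is that $\mu$-normality transfers correct continuation frequencies to any regular prefix-set: for regular $L$ whose set of prefix-occurrences in $X$ has positive density,
\[
\lim_{n\to\infty}\frac{\#\{i<n:X[0,i-1]\in L,\ X(i)=a\}}{\#\{i<n:X[0,i-1]\in L\}}=\mu(a).
\]
I would obtain this by decomposing $L$ through its minimal automaton, writing numerator and denominator as finite sums of frequencies of finite words (approximated via sufficiently long synchronizing blocks), and invoking $\freq(w,X)=\mu(w)$ termwise. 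Summing over $q\in S$ then gives $\freq(a,Y)=\mu(a)$ for every $a$, i.e.\ $Y$ is $\mu$-balanced.

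For the converse (ii)$\Rightarrow$(i) I would argue by contraposition. Suppose $X$ is not $\mu$-normal and let $k$ be least such that some word of length $k$ has the wrong frequency. If $k=1$ the trivial selector (which marks every position) returns $Y=X$, an infinite non-$\mu$-balanced subsequence. If $k>1$, minimality gives $\freq(u,X)=\mu(u)$ for all $u$ of length $k-1$; writing the offending word as $w=ub$ with $|u|=k-1$, the sliding-window selector that marks the letter following each occurrence of $u$ produces an infinite subsequence $Y$ (infinite because $u$ occurs with frequency $\mu(u)>0$) with $\freq(b,Y)=\freq(w,X)/\freq(u,X)\neq\mu(b)$, contradicting (ii). When the frequency of $w$ fails to converge the same construction yields $\freq^-(b,Y)\neq\mu(b)$ or $\freq^+(b,Y)\neq\mu(b)$, which again violates balancedness.

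I expect the main obstacle to be the displayed regular-density lemma. The clean reduction of the Bernoulli case to the uniform case is \emph{not} available: a positive Bernoulli measure with irrational probabilities cannot be finite-state recoded to the uniform measure, so the classical Agafonov counting must be genuinely re-run with $\mu$-weighted statistics rather than transported. The delicate point is controlling the error in approximating the density of a regular language by frequencies of finite words \emph{uniformly} enough that the termwise use of $\freq(w,X)=\mu(w)$ survives the passage to the limit; this is precisely where the strongly-connected-component structure of the minimal automaton must be exploited, so that only recurrent states carry positive density and the transient states wash out.
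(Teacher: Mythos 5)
A preliminary remark: the paper never proves Theorem~\ref{thm:agafonov-bernoulli}; it is imported as a black box from Seiller and Simonsen~\cite{SeillerS2020} (the closest the paper comes to a proof is the balanced-word/Markov-chain machinery of Section~\ref{sec:agafonovalternateproof}, which handles the more general PFA statement). So your attempt has to stand on its own. Its outer shell does: (iii)$\Rightarrow$(ii) is trivial; your composition-of-selectors argument together with the conditional-frequency induction correctly reduces (iii) to (ii); and your contrapositive proof of (ii)$\Rightarrow$(i) --- select the letter following each occurrence of $u$, where $w=ub$ is a minimal-length word with wrong or non-existent frequency, so that $\freq(u,X)=\mu(u)>0$ --- is correct, and is essentially the same device the paper itself uses for the gambler in part $(ii)\Rightarrow(i)$ of Theorem~\ref{thm:schnorr-stimm-bernoulli}.

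The genuine gap is (i)$\Rightarrow$(ii). Your ``regular-density lemma'' is not a lemma on the way to the theorem: it \emph{is} the theorem (a selector's set of selected positions is exactly $\{i : X[0,i-1]\in L\}$ for a regular $L$, and conversely), so everything hinges on the proof you sketch for it, and that sketch fails. You propose to write the numerator and denominator as finite sums of frequencies of finite words via ``sufficiently long synchronizing blocks'' of the minimal automaton. But membership of the prefix $X[0,i-1]$ in $L$ is not determined by any bounded window of $X$ ending at $i$, and synchronization cannot repair this, because the hard instances of Agafonov's theorem are exactly the automata that admit \emph{no synchronizing word}: permutation automata, e.g.\ the two-state parity selector over $\{0,1\}$ that selects position $i$ iff $X[0,i-1]$ contains an even number of $1$s, or the mod-$m$ counter that selects every $m$-th position. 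In such automata every letter acts as a permutation of the states, so no word can merge states, and your approximation scheme has nothing to hang on to; note also that these automata form a single strongly connected component with no transient states, so your remark that ``transient states wash out'' gives no traction either. All known proofs close this gap by a different mechanism: the mixing/ergodic behaviour of the \emph{Markov chain} obtained by driving the automaton with $\mu$-random letters, combined with the equivalence of $\mu$-normality and $\mu$-block-normality --- this is Schnorr--Stimm's Lemma~4.5, the route of Seiller--Simonsen, and also what the paper's Lemma~\ref{lem:ergodiconnormal} and Lemmas~\ref{lem} and~\ref{lem:tool} of Section~\ref{sec:agafonovalternateproof} encapsulate; there the randomization is supplied by the input letters, not by collapsing states of the automaton. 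A related unproved point: your lemma assumes the selected positions have positive density, whereas (ii) only assumes $Y$ is infinite; showing that an infinite selection from a $\mu$-normal sequence automatically has positive density is itself part of this same ergodic argument (it is exactly the content of Lemma~\ref{lem:ergodiconnormal}). As it stands, the heart of the theorem is asserted rather than proved.
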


We can also easily generalize the notion of gambler to the setting of Bernoulli measures: it suffices to define a $\mu$-gambler $\G=(Q,A,q_I,\delta,\gamma)$ as before but with the fairness condition on~$\gamma$ replaced by $\sum_{a \in A} \mu(a)\gamma(q,a) =1$ for every~$q$. The function $\capital$ and the notion of success are defined as before. \\ 

We will now prove that the Schnorr-Stimm theorem, just like the Agafonov theorem, can also be extended to Bernoulli measures.

\begin{theorem}[Schnorr-Stimm theorem for Bernoulli measures] \label{thm:schnorr-stimm-bernoulli}
	For $X \in A^\omega$ and a Bernoulli measure $\mu$, the following are equivalent.
	\item[(i)] $X$ is $\mu$-normal. 
	\item[(ii)] No $\mu$-gambler $\G$ wins by betting on~$X$.  
\end{theorem}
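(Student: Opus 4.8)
The plan is to prove the two implications separately, in each case reducing to tools already at hand. The forward direction (i)$\Rightarrow$(ii) I will reduce to the Agafonov theorem for Bernoulli measures (Theorem~\ref{thm:agafonov-bernoulli}), while the reverse direction (ii)$\Rightarrow$(i) I will establish contrapositively, by manufacturing from any frequency anomaly in $X$ an explicit finite-state $\mu$-gambler that wins.

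For (i)$\Rightarrow$(ii), suppose $X$ is $\mu$-normal and let $\G=(Q,A,q_I,\delta,\gamma)$ be a $\mu$-gambler. The key observation is that the capital factorizes over the states: grouping the positions $i<n$ according to the state $q_i=\delta^*(q_I,X[0..i-1])$ visited just before reading $X(i)$, one gets $\capital(\G,X[0..n])=\prod_{q\in Q} C_q(n)$, where $C_q(n)=\prod_{i<n,\, q_i=q}\gamma(q,X(i))$ is exactly the capital of a single-state gambler applying the fixed fair betting function $\gamma(q,\cdot)$ at every step of the subsequence $Y_q$ selected by the selector with selection set $\{q\}$. By Theorem~\ref{thm:agafonov-bernoulli} each $Y_q$ is either finite or $\mu$-balanced. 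If $Y_q$ is finite then $C_q(n)$ is eventually constant. If $Y_q$ is $\mu$-balanced, writing the log-capital after $m$ steps as $\sum_{a\in A}\#\{j<m:Y_q(j)=a\}\,\log\gamma(q,a)=m\sum_{a}\widehat{\mu}_a(m)\log\gamma(q,a)$ with $\widehat{\mu}_a(m)\to\mu(a)$, fairness and strict concavity of the logarithm give $\sum_a\mu(a)\log\gamma(q,a)\le\log\sum_a\mu(a)\gamma(q,a)=0$, with equality only when $\gamma(q,\cdot)\equiv 1$. Hence the single-state capital either tends to $0$ (when $\gamma(q,\cdot)\not\equiv 1$, which also subsumes the case $\gamma(q,a)=0$) or is identically $1$; in all cases $\sup_n C_q(n)<\infty$. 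Since $Q$ is finite, $\capital(\G,X[0..n])\le\prod_q\sup_n C_q(n)<\infty$, so $\G$ does not win. This is precisely where the apparent difficulty dissolves: a non-positive growth rate of $\log\capital$ does not by itself bound its $\limsup$, but after reduction to a single fixed bet on a balanced sequence a non-trivial bet \emph{loses exponentially} rather than merely failing to grow.

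For (ii)$\Rightarrow$(i), assume $X$ is not $\mu$-normal; I will produce a winning gambler. First a combinatorial reduction: pick a word $w$ of minimal length whose frequency in $X$ fails to converge to $\mu(w)$, and write $w=v\cdot a$. By minimality $\freq(v,X)=\mu(v)>0$, and since $\freq(va,X[0..n])=\widehat{\rho}(n)\cdot\freq(v,X[0..n])\cdot(1+o(1))$ where $\widehat{\rho}(n)=\nbocc(va,X[0..n])/\nbocc(v,X[0..n])$ is the empirical frequency of $a$ immediately after $v$, the anomaly forces $\limsup_n\widehat{\rho}(n)\ge\mu(a)+\epsilon$ or $\liminf_n\widehat{\rho}(n)\le\mu(a)-\epsilon$ for some $\epsilon>0$. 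I then build a gambler whose states track the last $|v|$ letters read; it bets trivially unless this window equals $v$, in which case it plays the fair function $\gamma(a)=1+\beta$ and $\gamma(b)=1-\beta\mu(a)/(1-\mu(a))$ for $b\neq a$, with $\beta$ a small parameter whose sign matches the deviation. Writing $N(n)\sim\nbocc(v,X[0..n])\to\infty$ for the number of bets placed, the log-capital equals $N(n)\,\phi(\widehat{\rho}(n))$ for an affine $\phi$ with $\phi(\mu(a))\le 0$ and, for $|\beta|$ small enough, $\phi(\mu(a)\pm\epsilon)>0$; evaluating along the times realizing the deviation yields $\capital(\G,X[0..n])\to\infty$ there, so $\G$ wins.

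I expect the main obstacle to sit in this reverse direction: the combinatorial step that converts a \emph{global} failure of $\mu$-normality (some word with wrong frequency) into a \emph{local}, one-step conditional anomaly detectable by a finite sliding window, together with the care needed to ensure the resulting gambler attains an infinite $\limsup$ of capital rather than merely positive average growth. The forward direction should be routine once the multiplicative factorization of the capital over states is in place and Theorem~\ref{thm:agafonov-bernoulli} is invoked.
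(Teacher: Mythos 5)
Your proposal is correct, and its forward direction (i)$\Rightarrow$(ii) is essentially the paper's own argument: factor the capital multiplicatively over states into single-state gamblers, invoke Theorem~\ref{thm:agafonov-bernoulli} to see that each betting state reads a $\mu$-balanced subsequence, and kill any non-trivial bet with Jensen's inequality applied to the strictly concave logarithm. The reverse direction shares the paper's skeleton (a minimal bad word $w=va$, a sliding-window automaton that bets only when the last $|v|$ letters equal $v$) but differs in the analytic core. The paper packages the empirical conditional distribution after $v$ as a sequence of vectors $f_n$ in the simplex, uses compactness to extract a cluster point $\nu\neq\mu$, and bets the likelihood ratio $\gamma(q_u,a)=\nu(a)/\mu(a)$, so that the infinitely-often growth exponent is the relative entropy $\mathrm{D}_{KL}(\nu\,\|\,\mu)>0$. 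You instead reduce to a scalar deviation --- the conditional frequency of the single letter $a$ after $v$ stays at distance $\varepsilon$ from $\mu(a)$ infinitely often --- and play a one-parameter fair perturbation, betting $1+\beta$ on $a$ and compensating uniformly on the other letters, winning because the first-order gain of order $\varepsilon|\beta|$ dominates the $O(\beta^2)$ loss incurred at the fair point when $|\beta|$ is small. Your route is more elementary (no compactness argument, no information-theoretic input) and still produces an infinitely-often exponential rate, so it would equally support the dichotomy of Theorem~\ref{thm:bernoullidichotomy}; the paper's cluster-point/KL formulation buys a cleaner and letter-symmetric expression for that rate, namely $\mu(u)\,\mathrm{D}_{KL}(\nu\,\|\,\mu)$, which is the exponent quoted in its proof of the dichotomy, at the cost of slightly heavier machinery. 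The only points deserving care in your write-up are routine: the identity $\log\capital(\G,X[0..n])=N(n)\,\phi(\widehat{\rho}(n))$ holds only up to $O(1)$ boundary corrections (occurrences of $v$ straddling position $n$), and the monotonicity of the affine $\phi$ in the direction matching the sign of $\beta$ is what lets you pass from $\widehat{\rho}(n)\geq\mu(a)+\varepsilon$ to $\phi(\widehat{\rho}(n))\geq\phi(\mu(a)+\varepsilon)>0$; neither affects the conclusion.
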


\begin{proof}
	$(i) \Rightarrow (ii)$. Suppose that $X \in A^\omega$ is $\mu$-normal and consider a $\mu$-gambler $\G=(Q,A,q_I,\delta,\gamma)$. 
	
	We can assume that the $\mu$-gambler only has one state on which it places a non-trivial bet. Indeed, define for every $q \in Q$ the $\mu$-gambler $\G^{[q]}=(Q,A,q_I,\delta,\gamma^{[q]})$ where $\gamma^{[q]}(q,a)=\gamma(q,a)$ for all~$a$ and $\gamma^{[q]}(q',a)=1$ for $q' \not = q$. That is, $\G^{[q]}$ is the gambler $\G$ where all states but state~$q$ are neutralized (no bet is placed while on them). By the multiplicative nature of the function $\capital$, we have for all~$n$:
	\[
	\capital(\G,X[0,n]) = \prod_{q \in Q} \capital(\G^{[q]},X[0,n])
	\]
	Thus if we can show that all quantities $\capital(\G^{[q]},X[0,n])$ are bounded, we are done. Let us assume that there is a state~$r$ that is the unique state on which $\G$ bets. If instead of $\G$ we consider the selector $\S=(Q,A,q_I,\delta,\{r\})$ with only~$r$ as the selecting state, we know by Agafonov's theorem for Bernoulli measures (Theorem~\ref{thm:agafonov-bernoulli}) that the subsequence $Y$ of $X$ selected by $\S$ is $\mu$-normal, hence in particular $\mu$-balanced. But this subsequence is precisely the values of $X$ on which $\G$ bets! 
	
	We can further assume that $Y$ is infinite, otherwise it means that the run of $\G$ on~$X$ passes by~$r$ finitely often, hence $\G$ certainly cannot win as other states are not betting states. Now, suppose that at the $n$-th step of the run on~$X$ the state~$r$ has been visited $k=k(n)$ times. We have 
	\[
	\capital(\G,X[0,n]) = \prod_{a \in A} \gamma(r,a)^{\nbocc(a,Y[0,k])}
	\] 
	But since $Y$ is $\mu$-normal we have, for every~$a$, $\nbocc(a,Y[0,k]) = \mu(a)k + o(k)$. Thus,
	\[
	\capital(\G,X[0,n]) = \prod_{a \in A} \gamma(r,a)^{\mu(a)k + o(k)}
	\] 
	or equivalently 
	\[
	\log \capital(\G,X[0,n]) = (k+o(k)) \cdot \sum_{a \in A} \mu(a)\cdot  \log \gamma(r,a) 
	\] 
	(here we assume that all values $\gamma(r,a)$ involved in the product are positive for if not then the capital falls to~$0$ and we are done). Since we have  $\sum_{a \in A} \mu(a)=1$ ($\mu$ being a distribution), we can use the strict concavity of the function $\log$ on $(0,+\infty)$ to apply Jensen's inequality and get 
	\[
	\sum_{a \in A} \mu(a)\cdot  \log \gamma(r,a) \leq \log \left( \sum_{a \in A} \mu(a) \gamma(r,a)\right)
	\]
	with strict inequality when not all $\gamma(q,a)$ are equal (which is the case where $\G$ makes non-trivial bets). But by the fairness condition, we have $\sum_{a \in A} \mu(a) \gamma(r,a)=1$ hence we see that $\log \capital(\G,X[0,n])$ is either $0$ or ultimately negative which either way means that $\G$ does not win. \\
	
	$(ii) \Rightarrow (i)$. Assume that $X$ is not $\mu$-normal. This means that there is some word~$w$ such that $\freq(w,X[0,n])$ does not converge to $\mu(w)$. Let us assume that~$w$ is such a word of minimal length and write $w=ux$ with $u \in A^*$ and $x \in A$. 
	
	Consider the sequence of vectors $f_n$ defined by
	\[
	f_n = \left( \frac{\freq(ua,X[0,n])}{\sum_{b \in A} \freq(ub,X[0,n])} \right)_{a \in A}
	\]
	
	or equivalently, 
	\[
	f_n = \left( \frac{\nbocc(ua,X[0,n])}{\sum_{b \in A} \nbocc(ub,X[0,n])} \right)_{a \in A} 
	\]

	All of these vectors belong to the set $\Gamma=\{f: A \rightarrow [0,1] : \sum_{a \in A} f(a)=1\}$. This is a compact set, hence the sequence $f_n$ must have cluster points. By definition of $u$ and $x$, we know that $f_n$ does not converge to $\mu$ because $f_n(x)$ does not converge to $\mu(x)$: Indeed, in the definition of $f_n(x)$, the denominator converges to $\freq(u,X)$ which by minimality of~$w$ is defined and equal to~$\mu(u)$, while the numerator is equal to $\freq(w,X[0,n])$ which by definition of $w$ does not converge to $\mu(w)=\mu(u)\mu(x)$.

	Therefore, the sequence $(f_n)$ must have at least one cluster point~$\nu$ different from $\mu$. Fix such a cluster point $\nu$. 
	
	We now build our gambler $\G=(Q,A,q_I,\delta,\gamma)$. The idea is that the gambler will record the last $|u|$ bits it read and will only place bets when these exactly form the word~$u$. Let us thus take $Q=\{q_v : v \in A^*, |v| \leq |u|\}$ initial state $q_I=q_\emptystr$ and define $\delta$ by 
	\begin{align*}
		\delta(q_v,a)= \begin{cases}
			q_{va}, & \text{if } \lvert v \rvert < |u|,\\
			q_{v'a} & \text{ if } |v|=|u| \text{ and } v=xv' \text{ with } x \in A.		\end{cases}
	\end{align*}
	
	Now, define $\gamma(q_v,a)=1$ whenever $v \not= u$ and $\gamma(q_u,a) = \nu(a)/\mu(a)$ for all~$a \in A$. Observe that this is a valid $\mu$-gambler as the fairness condition is satisfied: $\sum_{a \in A} \nu(a)/\mu(a) \cdot \mu(a) = \sum_a \nu(a) = 1$.
	
	Suppose that after reading $n$ letters of~$X$ the state $q_u$ has been visited~$k=k(n)$ times. First, observe that $k(n)$ tends to $+\infty$. Indeed, the state $q_u$ is visited whenever $u$ is seen as a subword of $X$. But we assumed that $u$ appears in~$X$ with frequency $\mu(u)$, by minimality of~$w$. 
	
	
	Second, unfolding the definition of $f_n$, we have,
	\[
	\capital(\G,X[0,n]) = \prod_{a \in A} \left( \frac{\nu(a)}{\mu(a)} \right)^{k \cdot f_n(a)}.
	\] 
	This gives
	\[
	\log \capital(\G,X[0,n])= k \cdot \sum_{a \in A} f_n(a) \log \left( \frac{\nu(a)}{\mu(a)} \right).
	\]
	Since $\nu$ is a cluster point of the sequence $f_n$, for any fixed $\varepsilon>0$ there are infinitely many~$n$ (or $k$) such that
	\[
	\sum_{a \in A} f_n(a) \log \left( \frac{\nu(a)}{\mu(a)} \right) \geq \sum_{a \in A} \nu(a) \log \left( \frac{\nu(a)}{\mu(a)} \right)  - \varepsilon.
	\]
	But the term $\sum_{a \in A} \nu(a) \log \left( \frac{\nu(a)}{\mu(a)} \right)$ is the relative entropy from $\mu$ to $\nu$ (also known as Kullback-Liebler divergence, see for example~\cite{CoverT2006}) which we denote by $\mathrm{D}_{KL}(\nu || \mu)$. This quantity is non-negative in general and is positive when $\nu \not= \mu$, which is the case here. We have thus established that for any fixed $\varepsilon$, there are arbitrarily large~$n$ and $k$ such that 
	\[
	\log \capital(\G,X[0,n]) \geq k \left( \mathrm{D}_{KL}(\nu || \mu) - \varepsilon \right).
	\]
	Taking any $\varepsilon < \mathrm{D}_{KL}(\nu || \mu)$, this shows that 
	\[
	\limsup_{n \rightarrow +\infty} \capital(\G,X[0,n]) = +\infty.
	\]\end{proof}

Let us note that this last proof actually gives us a finer analysis of normality, in terms of the rate of failure or success of the gambler. This was already observed by Schnorr and Stimm in their seminal paper, where they proved the following.

\begin{theorem}[Schnorr-Stimm dichotomy theorem~\cite{SchnorrS1972}]
	Let $X$ be an infinite sequence in $A^\omega$. \\
	(i) If $X$ is normal and $\G$ is a gambler, then the capital of $\G$ throughout the game either is ultimately constant or decreases at an exponential rate.\\
	(ii) If $X$ is not normal, then there exists a gambler~$\G$ which wins against~$X$ at an `infinitely often' exponential rate (i.e., $\limsup_n \log(\capital)/n >0$). 
\end{theorem}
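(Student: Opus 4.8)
The plan is to handle the two parts separately, but in both cases to recycle, almost verbatim, the machinery developed in the proof of Theorem~\ref{thm:schnorr-stimm-bernoulli} specialized to the uniform measure $\mu(a)=1/|A|$, and then to upgrade the qualitative conclusions (``loses''/``wins'') into the quantitative rate statements asserted here. The only genuinely new ingredient is a frequency estimate on how often a finite automaton visits its recurrent states when reading a normal sequence.

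For part~(i), I would begin exactly as in the $(i)\Rightarrow(ii)$ direction above: decompose $\G$ into the single-state gamblers $\G^{[q]}$, so that
\[
\log\capital(\G,X[0..n])=\sum_{q\in Q}\log\capital(\G^{[q]},X[0..n]).
\]
Writing $k_q(n)$ for the number of visits to state~$q$ in the first $n$ steps, the same computation as before gives $\log\capital(\G^{[q]},X[0..n])=c_q\,k_q(n)+o(k_q(n))$, where $c_q=\frac1{|A|}\sum_{a\in A}\log\gamma(q,a)$. By the strict concavity of $\log$ and the fairness condition, Jensen's inequality yields $c_q\le 0$, with $c_q=0$ exactly when $\G$ bets trivially at $q$ (all $\gamma(q,a)=1$). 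The underlying DFA run on $X$ eventually remains inside the set $R$ of states visited infinitely often, which is strongly connected, and the states outside $R$ contribute only $O(1)$ to $\log\capital$. So the dichotomy reduces to whether some state of $R$ carries a non-trivial bet.

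The key quantitative step, and the one I expect to be the \emph{main obstacle}, is to show that every recurrent state is visited with positive lower frequency, i.e.\ $\liminf_n k_q(n)/n>0$ for all $q\in R$. I would prove this using the Agafonov theorem for the uniform measure (Theorem~\ref{thm:agafonov-bernoulli}): the subsequence selected at any $q\in R$ is infinite and balanced, so each outgoing transition $q\xrightarrow{a}\delta(q,a)$ is taken a $\frac1{|A|}(1+o(1))$ fraction of the visits to $q$, whence $k_{\delta(q,a)}(n)\ge\frac1{2|A|}k_q(n)$ for all large $n$. Chaining this inequality along a path of length $<|R|$ within $R$ (which exists between any two states of $R$ by strong connectivity) shows that all counts $k_q(n)$, $q\in R$, are comparable up to a constant factor; since they sum to $n-O(1)$, each satisfies $\liminf_n k_q(n)/n\ge(|R|(2|A|)^{|R|-1})^{-1}>0$. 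With this in hand the dichotomy follows: if every state of $R$ bets trivially, then after the run settles in $R$ the capital is only ever multiplied by $1$ and is ultimately constant; otherwise some $q^\ast\in R$ has $c_{q^\ast}<0$, and since the remaining $c_q\le 0$ only help, $\log\capital(\G,X[0..n])\le c_{q^\ast}k_{q^\ast}(n)+o(n)\le-\alpha n$ for some $\alpha>0$ and all large $n$, i.e.\ the capital decreases at an exponential rate.

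For part~(ii), I would reuse the gambler $\G$ built in the $(ii)\Rightarrow(i)$ direction of Theorem~\ref{thm:schnorr-stimm-bernoulli} for the uniform measure, betting at the state $q_u$ with $\gamma(q_u,a)=\nu(a)/\mu(a)$ for a cluster point $\nu\neq\mu$ of the frequency vectors. That argument already produces, for every $\varepsilon>0$, infinitely many $n$ with $\log\capital(\G,X[0..n])\ge k(n)\big(\mathrm{D}_{KL}(\nu\|\mu)-\varepsilon\big)$, where $k(n)$ counts the visits to $q_u$. The only thing left is to read off the exponential rate in $n$: by minimality of the witnessing word $w=ux$, its proper prefix $u$ has $\freq(u,X)=\mu(u)=|A|^{-|u|}$, so $k(n)=\nbocc(u,X[0..n])$ satisfies $k(n)/n\to|A|^{-|u|}>0$. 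Dividing the displayed inequality by $n$ and passing to the limit along the relevant subsequence gives $\limsup_n \log\capital(\G,X[0..n])/n\ge|A|^{-|u|}\big(\mathrm{D}_{KL}(\nu\|\mu)-\varepsilon\big)$, which is strictly positive once $\varepsilon<\mathrm{D}_{KL}(\nu\|\mu)$, establishing the claimed ``infinitely often'' exponential winning rate.
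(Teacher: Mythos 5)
Your proposal is correct, and its part~(ii) is essentially the paper's own argument: the paper, too, reuses the gambler from the $(ii)\Rightarrow(i)$ direction of Theorem~\ref{thm:schnorr-stimm-bernoulli} and reads off the rate from the fact that, by minimality of $w=ux$, the state $q_u$ is visited with frequency $\mu(u)$, giving exponent $\mu(u)\,\mathrm{D}_{KL}(\nu\|\mu)$. Part~(i), however, takes a genuinely different route at the key quantitative step. The paper (in the Bernoulli version, Theorem~\ref{thm:schnorr-stimm-dicho-bernoulli}) invokes Lemma~\ref{lem:ergodiconnormal}: on a normal input, every state is visited either finitely often or with an \emph{exact} asymptotic frequency $\pi_q>0$, a fact proved via the ergodic theorem for the induced Markov chain following Schnorr--Stimm's Lemma 4.5, which in turn requires the equivalence of $\mu$-normality and $\mu$-block normality (Section~\ref{sec:blocknormality}). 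You instead prove only a positive \emph{lower} density for the infinitely-visited states: Agafonov's theorem (Theorem~\ref{thm:agafonov-bernoulli}) applied at each recurrent state $q$ makes the selected subsequence balanced, so each transition out of $q$ is taken on a $\frac{1}{|A|}(1+o(1))$ fraction of its visits, and chaining $k_{\delta(q,a)}(n)\geq\frac{1}{2|A|}k_q(n)$ along paths inside the (closed, strongly connected) recurrent set $R$ together with $\sum_{q\in R}k_q(n)=n-O(1)$ gives $\liminf_n k_q(n)/n>0$ for every $q\in R$. This conclusion is weaker than the paper's lemma --- you get a lower bound on the decay exponent rather than the exact exponent $\sum_r\pi_r\alpha_r$ --- but it is all the dichotomy requires, and it buys self-containedness: no Markov-chain ergodic theorem and no block-normality detour, only Agafonov's theorem, which the proof of Theorem~\ref{thm:schnorr-stimm-bernoulli} already uses. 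One minor point you share with the paper: if $\gamma(q,a)=0$ for some letter actually read at a recurrent betting state, the capital falls to $0$ and stays there; this degenerate case (ultimately constant at $0$) should be split off explicitly, just as the paper does in the proof of Theorem~\ref{thm:schnorr-stimm-bernoulli}.
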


As a byproduct of our proof of~Theorem~\ref{thm:schnorr-stimm-bernoulli}, we have the same dichotomy for positive Bernoulli measures (i.e., Bernoulli measures such that $\mu(a)>0$ for every letter):

\begin{theorem}[Schnorr-Stimm dichotomy theorem for Bernoulli measures] \label{thm:schnorr-stimm-dicho-bernoulli}
	\label{thm:bernoullidichotomy}
	Let $X$ be an infinite sequence in $A^\omega$ and $\mu$ a positive Bernoulli measure. \\
	(i) If $X$ is $\mu$-normal and $\G$ is a $\mu$-gambler, then the capital of $\G$ throughout the game either is ultimately constant or decreases at an exponential rate.\\
	(ii) If $X$ is not $\mu$-normal, then there exists a $\mu$-gambler~$\G$ which wins against~$X$ at an `infinitely often' exponential rate.
\end{theorem}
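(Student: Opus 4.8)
The plan is to obtain both parts by refining the two directions of the proof of Theorem~\ref{thm:schnorr-stimm-bernoulli}; the only genuinely new ingredient is to convert the per-bet (per-visit) exponential rates appearing there into per-symbol rates, which amounts to controlling the visit frequency $k(n)/n$ of the relevant betting states.

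For part (ii) I would reuse verbatim the gambler $\G$ constructed in the proof of $(ii)\Rightarrow(i)$ of Theorem~\ref{thm:schnorr-stimm-bernoulli}, whose unique betting state $q_u$ is visited exactly $k(n)=\nbocc(u,X[0..n-1])$ times after $n$ symbols. By the minimality of $w=ux$, the word $u$ already occurs with the correct frequency, so $k(n)/n \to \mu(u)>0$. That proof produces a cluster point $\nu\neq\mu$ of $(f_k)$ and a subsequence $n_i\to\infty$ along which $\log\capital(\G,X[0..n_i]) \ge k(n_i)\big(\mathrm{D}_{KL}(\nu\|\mu)-\varepsilon\big)$. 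Dividing by $n_i$ and using $k(n_i)/n_i\to\mu(u)$ gives $\limsup_n \tfrac1n\log\capital(\G,X[0..n]) \ge \mu(u)\big(\mathrm{D}_{KL}(\nu\|\mu)-\varepsilon\big)$, which is strictly positive once $\varepsilon<\mathrm{D}_{KL}(\nu\|\mu)$. This is exactly the `infinitely often' exponential winning rate.

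For part (i) I would start from the single-state decomposition and the Jensen estimate of the proof of $(i)\Rightarrow(ii)$, which yield
\[
\log\capital(\G,X[0..n]) = \sum_{q\in Q}\big(k_q(n)+o(k_q(n))\big)\,c_q, \qquad c_q := \sum_{a\in A}\mu(a)\log\gamma(q,a)\le 0,
\]
where $c_q=0$ exactly when $\G$ bets trivially at $q$, and $k_q(n)$ counts the visits to $q$. Let $D$ be the set of states visited infinitely often. If every $q\in D$ has $c_q=0$, then no non-trivial bet is placed after the run enters $D$, so the capital is ultimately constant. Otherwise some $q^\ast\in D$ has $c_{q^\ast}<0$, and the whole point is to show $\liminf_n k_{q^\ast}(n)/n>0$, which turns the negative per-visit rate into a negative per-symbol rate.

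The crux, and the step I expect to be the main obstacle, is establishing that each recurrent state is visited with positive asymptotic frequency. I would argue as follows. Applying Theorem~\ref{thm:agafonov-bernoulli} to the selector whose sole selecting state is a given $p\in D$, the subsequence of letters read immediately after visiting $p$ is $\mu$-balanced; hence, among the $k_p(n)$ visits to $p$, the fraction followed by a given letter $a$ tends to $\mu(a)>0$. This has two consequences. First, no transition can leak out of $D$ (a transition $p\to\delta(p,a)$ is taken with frequency $\mu(a)>0$ among infinitely many visits, forcing its target to be recurrent), so $D$ is closed under $\delta$; together with the fact that any two states of $D$ are joined, in both directions, by transitions actually taken along the run, $D$ is closed and strongly connected. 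Second, the empirical transition frequencies converge to those of the Markov chain on $D$ with $P(p,p')=\sum_{a:\,\delta(p,a)=p'}\mu(a)$, which is stochastic (as $D$ is closed) and irreducible (as $D$ is strongly connected and $\mu$ is positive). Writing $\widehat\pi_n(q)=k_q(n)/n$, conservation of visits gives the approximate balance equation $\widehat\pi_n=\widehat\pi_n P+o(1)$, so every cluster point of $(\widehat\pi_n)$ is stationary; by uniqueness of the stationary distribution $\pi$ of an irreducible chain, $\widehat\pi_n\to\pi$ with $\pi$ positive on $D$. Consequently $\tfrac1n\log\capital(\G,X[0..n]) \to \sum_{q\in D}\pi(q)\,c_q$, a strictly negative number whenever some recurrent bet is non-trivial, giving the claimed exponential decay.
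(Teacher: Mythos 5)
Your proposal is correct, and it splits neatly into one part that mirrors the paper and one part that takes a genuinely different route. Part (ii) coincides with the paper's own argument: both reuse the gambler built in the proof of Theorem~\ref{thm:schnorr-stimm-bernoulli}, note that by minimality of $w=ux$ the betting state $q_u$ is visited with frequency $\mu(u)>0$, and divide by $n$ to turn the per-visit rate into the per-symbol rate $\mu(u)\left(\mathrm{D}_{KL}(\nu\|\mu)-\varepsilon\right)$. For part (i) the architecture also matches (per-state decomposition, Jensen estimate, then convert per-visit decay into per-symbol decay), but the crucial visit-frequency fact is obtained differently. The paper invokes a dedicated lemma (Lemma~\ref{lem:ergodiconnormal}): on any $\mu$-normal input, each state is either visited finitely often or with a fixed positive asymptotic frequency $\pi_q$; its proof is outsourced to the ergodic theorem for Markov chains following Schnorr--Stimm, and requires the equivalence of $\mu$-normality with $\mu$-block normality developed in Section~\ref{sec:blocknormality}. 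You instead derive the frequencies self-containedly: Theorem~\ref{thm:agafonov-bernoulli} applied to single-state selectors gives the letter frequencies following each infinitely-visited state, whence the recurrent set $D$ is closed and strongly connected, the empirical state frequencies satisfy an approximate stationarity equation $\widehat\pi_n=\widehat\pi_n P+o(1)$, and uniqueness and positivity of the stationary distribution of the irreducible finite chain force $k_q(n)/n\to\pi(q)>0$ on $D$. Your route is more elementary and avoids the block-normality machinery entirely, reusing only a theorem already central to the paper; what it gives up is the uniformity in the paper's lemma (your $\pi$ may a priori depend on which recurrent class the run on $X$ enters, whereas the paper's $\pi_q$ is independent of $X$), but that uniformity is not needed for the dichotomy. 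Two small points to tidy up: the error term should be written $k_q(n)c_q+o\left(k_q(n)\right)$ rather than $\left(k_q(n)+o(k_q(n))\right)c_q$, and the degenerate case where some $\gamma(q,a)=0$ for a recurrent betting state (the capital hits $0$ and is then ultimately constant) should be set aside explicitly, as is done parenthetically in the proof of Theorem~\ref{thm:schnorr-stimm-bernoulli}.
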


Our proof of~Theorem~\ref{thm:schnorr-stimm-bernoulli} almost establishes this, but we do need an additional technical lemma and the block-wise characterization of $\mu$-normality. Let $\bfreq(u, w)$ denote the block frequency of the word $u$ in $w$, defined as the proportion of non-overlapping blocks in $w$ that are equal to~$u$. More precisely when $n=\lvert w \rvert$ and $k=\lvert u \rvert$,
\[
\bfreq(u,w) = \frac{	\# \{i : 0 \leq i \leq \lfloor n/k \rfloor~,~  w[ki,k(i+1)-1]=u
	\}}{\lfloor n/k \rfloor}. 
\]
For $w \in A^*$, let $\bfreq^-(w,X)$, $\bfreq^+(w,X)$ and $\bfreq(w,X)$ denote the lower, upper and limit block frequency of $w$ in $X$ defined similarly as $\freq^-(w,X)$, $\freq^+(w,X)$ and $\freq(w,X)$. A sequence $X$ is \emph{$\mu$-block normal} if for all words~$w \in A^*$, $\bfreq(w,X)=\mu(w)$. The following was shown by  Seiller and Simonsen.

\begin{lemma}[Seiller-Simonsen~\cite{SeillerS2020}]
	If a sequence $X \in A^\omega$ is $\mu$-normal then $X$ is $\mu$-block normal.
\end{lemma}

We note that the converse implication  also holds.

\begin{lemma}
	\label{lem:blocknormalimpliesnormal}
	If a sequence $X \in A^\omega$ is $\mu$-block normal then~$X$ is $\mu$-normal.
\end{lemma}

The above lemmas completes the proof of the following equivalence theorem between $\mu$-normality and $\mu$-block normality.

\begin{theorem}
	\label{thm:mu-blocknormal-equivalence}
	A sequence $X \in A^\omega$ is $\mu$-normal if and only if $X$ is $\mu$-block normal.
\end{theorem}

The proof of Lemma~\ref{lem:blocknormalimpliesnormal} uses a counting trick from the proof of Theorem 3.1 from~\cite{NandakumarPVV2021} which in turn is based on the proof of the main theorem in \cite{Maxfield1952}.

\begin{proof}[Proof of Lemma \ref{lem:blocknormalimpliesnormal}]
	As in the proof of Theorem 3.1 from \cite{NandakumarPVV2021}, for any finite length string
	$w=a_1a_2a_3\dots a_k\in A^k$ and large enough $n$, 
	\begin{align}
		\label{eq:pillaistheorem_one}
		\freq(w,X[0,n])=
		f_1(n) + f_2(n) +\dots+
		f_{(1+\left \lfloor\log_2  \frac{n}{k} \right \rfloor)}(n) + {\frac{(k-1)\cdot O(\log n) }{n-k+1}}
	\end{align}
	where $f_p(n)$ are defined as follows:
	\begin{align*}
		f_{p}(n)= \begin{cases}
			\frac{	\# \{i~ : ~ X[ki,k(i+1)-1]=w
				,~ 0 ~\leq ~i ~\leq ~\lfloor n/k \rfloor \}}{n-k+1}, & \text{if } ~p=1\\
				~ ~ \\
			\sum_{j=1}^{k-1}\frac{  \# \{i~ : ~ X[2^{p-1}ki,2^{p-1}k(i+1)-1] \in S_j
				,~ 0 \leq i \leq n/2^{p-1}k \}}{n-k+1}, &\text{if } 1<p \leq
			(1+\left \lfloor\log_2(n/k)\right \rfloor)\\ 
			~ \\
			0, \text{ otherwise.}
		\end{cases}
	\end{align*}
	In the above definition, $S_{j}$ is the set of strings of the form,
	$u\ a_1a_2 \dots a_k\ v$ where $u$ is some string of length $2^{p-2}k-j$ and $v$ is some string of length $2^{p-2}k-k+j$.
	
	Equation \ref{eq:pillaistheorem_one} shows that the frequency of $w$ in $X[0,n]$ can be written as a sum of different block frequencies. The quantity $f_1(n)$ counts the number of occurrences of~$w$ inside disjoint $k$-length blocks in $X[0,n]$, $f_2(n)$ counts the number of occurrences crossing the boundaries of these $k$-blocks, $f_3(n)$ counts those crossing boundaries of $2k$-blocks, and so on. In general, $f_p(n)$ counts the number of occurrences straddling boundaries of disjoint $2^{p-1}k$-length blocks. Each $f_p(n)$ may miss at most $(k-1)$ occurrences of~$w$, which are accounted for by the error term.
	
	Since $X$ is $\mu$-block normal, 
	\begin{align*}
		\lim_{n\to\infty} f_1(n) = \lim_{n\to\infty}\frac{	\# \{i~ : ~ X[ki,k(i+1)-1]=w
			~,~ 0 ~\leq ~i ~\leq ~\lfloor n/k \rfloor \}}{n-k+1} =
		\frac{\mu(w)}{k}.
	\end{align*}
	Now, when $p \leq 1+ \left\lfloor\log_2(\frac{n}{k}) \right\rfloor$,
	\begin{align*}
		\lim_{n\to\infty} f_p(n)
		=&  \sum_{j =1}^{k-1}\lim_{n\to\infty}
		\frac{  \# \{i ~: ~  X[2^{p-1}ki,2^{p-1}k(i+1)-1] \in S_j
			, 0 \leq i \leq n/2^{p-1}k \}}{n-k+1}\\
		&= \frac{\mu(w)}{2^{p-1}k}(k-1).
	\end{align*}
	Since
	$\langle \sum_{i=1}^{m} f_i(n) \rangle_{m \in \mathbb{N}}$ is uniformly
	convergent, we have
	\begin{align*}
		\freq(w,X) = \lim_{n\to\infty} \sum_{i=1}^{\infty} f_{i}(n)=  \sum_{i=1}^{\infty} \lim_{n\to\infty} f_{i}(n).
	\end{align*}
	Therefore from (\ref{eq:pillaistheorem_one}),
	\begin{align*}
		\lim_{n\to\infty} \freq(w,X[0,n])
		&= \frac{\mu(w)}{k} + (k-1)\mu(w) \Big[ \sum_{i=1}^\infty \frac{1}{2^i k} \Big] = \mu(w).
	\end{align*}
	Hence, $X$ is $\mu$-normal.
\end{proof}

We first recall two standard facts about finite Markov chains. We state them in a form that will be used below. We view distributions as row vectors.

\begin{lemma}[Limit average distribution \cite{Norris_1997}]
	\label{lem:limit-average-distribution}
	Let $\mathcal{M}$ be a finite homogeneous Markov chain with states $S$
	and transition matrix $P \in [0,1]^{S\times S}$.
	Assume $\mathcal{M}$ is ergodic, i.e. all states are recurrent and belong to the same
	recurrence class.
	Fix some initial state $s_0\in S$.
	Define $\eta \in \Delta(S)$ by
	\[
	\eta(t)  = \lim_{N\rightarrow +\infty} \frac{1}{N}\sum_{n < N} \mathbb{P}_{s_0}\left(S_n = t\right)\enspace .
	\]
	Then $\eta$ is well-defined, the limit exists, and it is independent of the choice of $s_0$.
	Moreover, $\eta$ is the unique solution to the affine system
	\[
	\eta P = \eta
	\qquad\text{and}\qquad
	\sum_{s\in S}\eta(s)=1\enspace .
	\]
\end{lemma}

\begin{lemma}
	\label{lem:stationary-distribution-continuity}
	Let $\mathcal{M}$ be a finite homogeneous Markov chain with states $S$
	and transition matrix $P \in [0,1]^{S\times S}$.
	Assume $\mathcal{M}$ is ergodic, i.e. all states are recurrent and belong to the same
	recurrence class.
	The limit average distribution is locally continuous in $P$.
\end{lemma}

\begin{proof}
	In the affine system of Lemma~\ref{lem:limit-average-distribution},
	there are $|S|$ variables and $|S|+1$
	equations. We can remove one of the equations in order to obtain a square matrix
	so that $\eta$ is the unique solution to $R\cdot \eta = y$, since
	\[
	\eta = R^{-1}\cdot y\enspace .
	\]
	In a neighbourhood of $R$, the determinant is non-zero and the inverse is continuous,
	hence $\eta$ depends continuously on $R$. Moreover $R$ is continuous in $P$,
	hence the result.
\end{proof}

We require the following technical lemma in the proof of Theorem~\ref{thm:schnorr-stimm-dicho-bernoulli}. The first Markov-chain fact above is the standard finite-state ergodic input used in its proof.

\begin{lemma}
	\label{lem:ergodiconnormal}
	Let $(Q,A,q_I,\delta)$ be a finite-state automaton, $\mu$ be a positive Bernoulli measure and let $V_q(n,X)$ denote the number of times the state $q$ is visited upon running the automaton using the first $n$ bits of $X \in A^\omega$. 
	
	Then, for every $q \in Q$, there exists a real number $\pi_q \geq 0$ such that, for every $\mu$-normal sequence $X \in A^\omega$:
	\begin{itemize}
		\item either $V_q(n,X)$ is ultimately constant (i.e., $q$ is visited only finitely often during the run on~$X$)
		\item or, $	\lim\limits_{n \to \infty} \frac{V_q(n,X)}{n} = \pi_q$ (i.e., the state~$q$ is visited with asymptotic frequency $\pi_q$) and this second case can only happen when $\pi_q>0$.
	\end{itemize}

\end{lemma}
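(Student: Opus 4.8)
The key observation is that this is fundamentally a statement about the asymptotic visiting frequencies of states in a finite automaton run on a $\mu$-normal sequence. Since $X$ is $\mu$-normal, every word $w \in A^*$ appears with frequency exactly $\mu(w)$, and the state reached after reading a prefix depends only on the last few letters read (bounded by the automaton's memory, since any DFA on a $\mu$-normal input eventually enters a recurrent structure governed by the word frequencies). The plan is to express $V_q(n,X)$ in terms of the frequencies of the finite words that lead to state $q$, and exploit $\mu$-normality to pin down the limiting frequency.

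\medskip

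\noindent\textbf{Approach.} First I would identify, for each state $q$, the set of words that cause the automaton to be in state $q$ after reading them. Concretely, for a word $v \in A^*$ let $q(v) = \delta^*(q_I, v)$. The automaton is in state $q$ at position $n$ precisely when $\delta^*(q_I, X[0..n-1]) = q$. The difficulty is that membership ``$q(v) = q$'' is not a property of a bounded suffix of $v$ in general. To handle this, I would invoke the standard fact that a DFA, viewed as a finite monoid action, has a well-defined recurrent behavior: there is a length $\ell$ (at most $|Q|$) such that for all sufficiently long $v$, whether $\delta^*(q_I, v) = q$ holds is essentially determined once the run has entered the strongly connected component containing the eventual states. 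More carefully, the key structural point is that on a $\mu$-normal (hence in particular every word appears) sequence, the run eventually stays within a single terminal strongly connected component $C$ of the automaton's transition structure, and among the states of $C$ the visiting frequencies are determined by the stationary distribution of the induced Markov chain on $C$ with transition probabilities $\mu(a)$.

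\medskip

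\noindent\textbf{Key steps, in order.} (1) Consider the Markov chain on $Q$ whose transition probability from $q$ to $\delta(q,a)$ is $\mu(a)$; since $\mu$ is positive, every letter transition carries positive weight. Decompose $Q$ into transient states and terminal (bottom) strongly connected components; each terminal component $C$ carries a unique stationary distribution $(\pi_q)_{q \in C}$ with $\pi_q > 0$ for $q \in C$. For transient states set $\pi_q = 0$, and for states in non-terminal strongly connected components also set the candidate value to $0$. (2) Given a $\mu$-normal $X$, the run either never enters a terminal component from which $q$ is reachable, in which case $q$ is visited finitely often and $V_q(n,X)$ is ultimately constant; or it enters the terminal component $C \ni q$ and thereafter remains in $C$. (3) In the latter case I would show $\lim_n V_q(n,X)/n = \pi_q$ by a counting argument: because $X$ is $\mu$-normal, for any fixed window length $m$ the empirical frequency of each word of length $m$ converges to its $\mu$-measure, so the empirical distribution of length-$m$ transition patterns converges, and standard ergodic/renewal reasoning (or the Perron–Frobenius theorem applied to the finite chain restricted to $C$) forces the visit frequencies to the stationary value $\pi_q$. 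Since $\pi_q > 0$ whenever $q \in C$, the stated positivity in the second case holds. (4) Finally I would verify that $\pi_q$ as defined depends only on the automaton and $\mu$, not on the particular $\mu$-normal $X$, which is immediate since it is the stationary distribution of a fixed Markov chain.

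\medskip

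\noindent\textbf{Main obstacle.} The hard part will be step (3): rigorously upgrading ``every word has the correct frequency'' to ``the state-visit frequency converges to the stationary distribution.'' The subtlety is that $X$ is a single fixed sequence, not a random draw, so I cannot directly invoke the ergodic theorem for the measure $\mu$; instead I must argue deterministically from the word-frequency hypothesis. The cleanest route is to observe that the vector of empirical frequencies of length-$(\ell+1)$ words converges (by $\mu$-normality) to the corresponding $\mu$-vector, then express the increments $V_q(n,X) - V_q(n-1,X)$ as indicator sums determined by bounded-length windows once the run is confined to $C$, and take Cesàro averages. Establishing that the run is eventually confined to a single terminal component — and that the finite initial transient contributes only an $o(n)$ error — requires care but follows from the fact that from any state every terminal component reachable from it is entered within $O(|Q|)$ steps along some word, and every such word appears in $X$ with positive frequency, so escape from non-terminal components happens with positive frequency and hence almost immediately in the Cesàro sense.
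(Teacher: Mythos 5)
Your overall architecture matches the paper's: you set up the same Markov chain on $Q$ with transitions $\mathbf{P}_{ij}=\sum_{a\in A}\mu(a)\,1_{\delta(i,a)=j}$, split $Q$ into transient states and bottom strongly connected components, define $\pi_q$ as the stationary probability inside the bottom component containing $q$ (and $0$ otherwise), and reduce everything to showing that the visit frequency converges to $\pi_q$ once the run is trapped in a bottom component. However, your step (3) — which you yourself flag as the main obstacle — rests on a false premise and therefore has a genuine gap. You propose to ``express the increments $V_q(n,X)-V_q(n-1,X)$ as indicator sums determined by bounded-length windows once the run is confined to $C$,'' and earlier you assert that ``the state reached after reading a prefix depends only on the last few letters read.'' This is not true for general DFAs, even inside a single terminal component: consider the two-state automaton over $\{0,1\}$ that tracks the parity of the number of $1$'s read so far. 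Its unique SCC is terminal, yet the state at time $n$ depends on the entire prefix $X[0..n-1]$ and is not determined by any window of bounded length. Consequently, knowing that empirical frequencies of length-$m$ windows converge to $\mu$ (for every fixed $m$) does not, by any direct Ces\`aro/windowing argument, control the visit frequencies: the quantity you need is the empirical \emph{joint} distribution of (state at position $t$, next letters), and the state marginal is exactly the unknown you are trying to determine. Your parenthetical appeal to Perron--Frobenius does not bridge this, since Perron--Frobenius only describes the chain $\mathbf{P}$, not the deterministic run on a fixed sequence.

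The paper closes this gap differently: it cuts $X$ into blocks, uses the notion of \emph{$\mu$-block normality} (frequencies of non-overlapping, aligned occurrences), and follows the argument of Lemma 4.5 of Schnorr--Stimm~\cite{SchnorrS1972} combined with the Ergodic Theorem for the Markov chain $\mathbf{P}$, with $\mu$ in place of the uniform measure. The point of working block-wise is that one never tries to recover the state from a local window; instead, the ergodic theorem gives that the expected fraction of visits to $q$ over a long block converges to $\pi_q$ \emph{uniformly in the starting state}, which neutralizes the unknown correlation between the state at a block boundary and the block's content. Making this legitimate requires knowing that $\mu$-normality implies $\mu$-block normality at every block length — a non-trivial equivalence which the paper proves separately (Section~\ref{sec:blocknormality}, Lemma~\ref{lem:blocknormalimpliesnormal}). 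A secondary, fixable imprecision in your step (2): to show the run leaves the transient part, it is not enough that each escape word occurs with positive frequency in $X$, because an occurrence helps only if the automaton is in the matching state when it occurs; the standard repair is to concatenate escape words into a single word $u$ that exits the transient part from \emph{every} state, and then use that $u$ occurs in $X$.
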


\begin{proof}
	On running an automaton upon a normal sequence, starting from any state, a strongly connected component must be reached in finitely many steps. Similar to \cite{SeillerS2020}, let us consider the Markov chain corresponding to the $\lvert Q \rvert \times \lvert Q \rvert$ stochastic matrix $\mathbf{P}$ where,
	\begin{align*}
		\mathbf{P}_{ij}=\sum\limits_{a \in A} \mu(a) \cdot 1_{\delta(i,a)=j}.	
	\end{align*}
	The proof follows using the Ergodic Theorem for Markov chains and the same steps in the proof of Lemma 4.5 from \cite{SchnorrS1972} by replacing the uniform measure with the positive Bernoulli measure induced by $\mu$. We note that this line of proof uses $\mu$-block normality, which is equivalent to $\mu$-normality (Theorem \ref{thm:mu-blocknormal-equivalence}).
\end{proof}

We now prove the Bernoulli version of the Schnorr-Stimm dichotomy.

\begin{proof}[Proof of Theorem~\ref{thm:schnorr-stimm-dicho-bernoulli}]
	In part $(i) \Rightarrow (ii)$ of our proof of~Theorem~\ref{thm:schnorr-stimm-bernoulli}, we showed that on a given state~$r$, either $r$ is not a betting state  ($\gamma$ is the constant~$1$ on this state) or it is and then the gambler  loses money exponentially fast in the number of times this state is visited, the exponent being $\alpha_r = \sum_{a \in A} \mu(a)\cdot  \log \gamma(r,a)$ which we proved to be negative. By Lemma~\ref{lem:ergodiconnormal}, betting states are either visited finitely often or with positive asymptotic density. If they are all visited finitely often, the capital stabilizes after the last bet is made. Otherwise, if betting states $r$ are visited with frequency $\pi_r$ and at least one $\pi_r$ is positive, then the gambler loses at an exponential rate, where the exponent is $\sum_r \pi_r \alpha_r$.
	
	In part $(ii) \Rightarrow (i)$, under the assumption of non-$\mu$-normality of~$X$, we built a gambler~$\G$ satisfying the following: for any fixed $\varepsilon$, there are arbitrarily large~$n$ and $k$ such that $
	\log \capital(\G,X[0,n]) \geq k \left( \mathrm{D}_{KL}(\nu || \mu) - \varepsilon \right)$. Here, $k$ is the number of visits to a state $q_u$. In the proof of Theorem \ref{thm:schnorr-stimm-bernoulli} part $(ii)$, this state $q_u$ is visited with frequency $\mu(u)$ for large enough~$n$. Hence the gambler has an `infinitely often' exponential rate of success with exponent $\mu(u) D_{KL}(\nu || \mu)$. 
\end{proof}

\section{The Agafonov theorem for PFAs}

We now want to prove the extension of Theorem~\ref{thm:agafonov-bernoulli} to probabilistic automata/selectors. A \emph{probabilistic finite automaton (PFA)} is a tuple $(Q,A,q_I,\delta)$ where $Q$ is a finite set of states, $A$ is a finite alphabet, $q_I$ is the initial state and $\delta: Q \times 	A \rightarrow \Delta(Q)$ is a probabilistic transition function, that is, $\Delta(Q)$ is the set of probability distributions over~$Q$. In this setting, we define inductively the random variables $\delta^*(q,w)$ by $\delta^*(q,\emptystr)=q$ and for $w \in A^*$ and $a \in A$, the event $\delta^*(q,w \cdot a) = q'$ is defined as the union
\[
\bigcup_{r \in Q} \left[ \delta^*(q,w)=r \wedge \delta_{|w|+1}(r,a)=q' \right]
\]
where $\{\delta_n(r,b) : n \in \mathbb{N}, r \in Q, b \in A\}$ is a family of independent random variables such that for all $(n,r,b)$, the distribution of $\delta_n(r,b)$ is $\delta(r,b)$.

Modulo this change of type of transition, probabilistic selectors as well as $\select$ are defined as before. This makes $\select(\S,X)$ a random variable for every given~$X \in A^{\leq \omega}$. We can now state Agafonov's theorem for PFAs, for which we will present two proofs. 

\begin{theorem}[Agafonov's theorem for PFA]\label{thm:agafonov-pfa}
	Let $X \in A^\omega$, and $\mu$ a Bernoulli measure over~$A$. The following are equivalent.
	\item[(i)] $X$ is $\mu$-normal. 
	\item[(ii)] For any deterministic selector $\S$ that selects a subsequence~$Y$ of~$X$, either~$Y$ is finite or $Y$ is $\mu$-normal. 
	\item[(iii)] For any probabilistic selector $\S$ that selects a subsequence~$Y$ of~$X$, almost surely, either~$Y$ is finite, or $Y$ is $\mu$-normal. 
\end{theorem}

\subsection{First proof: by reduction to the deterministic case}

In order to lift the deterministic Agafonov theorem for Bernoulli measures to the probabilistic case, we will need some preliminary lemmas about normality. \\

Given two alphabets $A$ and $B$, and given two sequences $X \in A^\omega$ and $Y \in B^\omega$, we denote by $X \otimes Y$ the sequence $Z$ over the alphabet $A \times B$ where $Z(n)=(X(n),Y(n))$. For two words $v$ and $w$ of the same length over $A$ and $B$ respectively, the product $v \otimes w$  is defined in the same way. Likewise, if $\mu$ and $\nu$ are Bernoulli measures over $A$ and $B$ respectively, $\mu \otimes \nu$ is the Bernoulli measure $\xi$ over $A \times B$ where $\xi((a,b))=\mu(a)\nu(b)$, for all $(a,b) \in A \times B$. Finally, if $Z$ is a sequence over a product alphabet $A \times B$, we denote by $\pi_0$ and $\pi_1$ its first and second projection respectively (in other words, $\pi_0(X \otimes Y)=X$ and $\pi_1(X \otimes Y)=Y$ for all $X,Y$). 

\begin{lemma}\label{lem:random-join}
	Let $A$ and $B$ be two alphabets and $\mu$, $\nu$ two Bernoulli measures over $A$ and $B$ respectively. If a sequence $X \in A^\omega$ is $\mu$-normal and a sequence $Y \in B^\omega$ is drawn at random according to $\nu$, then $\nu$-almost surely, $X \otimes Y$ is $\mu \otimes \nu$-normal. 
\end{lemma}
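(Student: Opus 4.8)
The plan is to fix a word $w = (a_0,b_0)(a_1,b_1)\cdots(a_k,b_k)$ over the product alphabet $A \times B$, and to show that $\nu$-almost surely $\freq(w, X \otimes Y) = \mu(w_A)\nu(w_B)$, where $w_A = a_0\cdots a_k$ and $w_B = b_0 \cdots b_k$ are the projections of $w$. Since $A^* \times$-words are countable, a single almost-sure statement over all $w$ will follow by a countable union. The key observation is that an occurrence of $w$ at position $i$ in $X \otimes Y$ happens exactly when $X[i,i+k] = w_A$ \emph{and} $Y[i,i+k] = w_B$. Because $X$ is fixed and $\mu$-normal, the set $I = \{i : X[i,i+k] = w_A\}$ of positions where the first coordinate matches has asymptotic density $\mu(w_A)$. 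The quantity we want to control is thus the density, within $I$, of those $i$ for which $Y[i,i+k] = w_B$ as well.

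First I would set up the relevant indicator random variables: for each $i \in I$, let $Z_i$ be the indicator that $Y[i,i+k] = w_B$. Since $Y$ is drawn i.i.d.\ according to $\nu$, each $Z_i$ has expectation $\nu(w_B)$. The heart of the argument is to show that $\frac{1}{|I \cap [0,n]|}\sum_{i \in I, i \le n} Z_i \to \nu(w_B)$ almost surely. The main obstacle is that the $Z_i$ are \emph{not} independent: whenever two matching positions $i, j \in I$ satisfy $|i-j| \le k$, the windows $Y[i,i+k]$ and $Y[j,j+k]$ overlap and $Z_i, Z_j$ are correlated. The standard fix is to partition $I$ into $k+1$ residue classes according to $i \bmod (k+1)$ (or to split into $k+1$ subsequences spaced at least $k+1$ apart); within each such class the windows are disjoint, so the corresponding $Z_i$ are genuinely independent.

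With independence restored on each class, I would apply a strong law of large numbers to each of the $k+1$ subsequences. Because the index set within a class is a deterministic (fixed, since $X$ is fixed) infinite set, the $Z_i$ along it form an i.i.d.\ sequence of bounded random variables with mean $\nu(w_B)$, and Kolmogorov's strong law gives almost-sure convergence of the running average to $\nu(w_B)$. The only subtlety is that a residue class might be finite (if $I$ is thin along that class) or that some class carries vanishing density; but since the density of $I$ itself is $\mu(w_A) > 0$ by positivity of $\mu$, at least the union has full mass and one handles empty/finite classes trivially. Recombining the $k+1$ class-wise averages, weighted by the density of each class inside $[0,n]$, yields
\[
\lim_{n \to \infty} \freq(w, (X \otimes Y)[0,n]) = \mu(w_A)\,\nu(w_B) = (\mu \otimes \nu)(w)
\]
almost surely.

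Finally I would take a countable intersection over all words $w \in (A \times B)^*$: each gives an event of $\nu$-measure $1$, and the intersection still has measure $1$, on which $X \otimes Y$ is $\mu \otimes \nu$-normal. The step I expect to demand the most care is the correlation issue between overlapping windows; everything else is a routine application of the strong law once the i.i.d.\ structure is extracted by the residue-class decomposition. An alternative to the residue-class splitting, which I would keep in reserve, is a direct second-moment / Borel--Cantelli estimate: bounding $\mathrm{Var}\!\left(\sum_{i \in I, i \le n} Z_i\right)$ by $O(n)$ using that overlap correlations only occur within distance $k$, then invoking Chebyshev along a subsequence and a monotonicity interpolation to fill the gaps. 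Either route reduces the lemma to the fixed, deterministic density information about $X$ supplied by its $\mu$-normality.
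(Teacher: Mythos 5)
Your proof is correct, but it follows a genuinely different route from the paper's. You fix the deterministic set $I$ of positions where the $A$-projection matches $w_A$ (a set of density $\mu(w_A)$ by normality of $X$), attach to each $i \in I$ the indicator $Z_i$ that the $Y$-window matches $w_B$, and decouple the overlap correlations by splitting $I$ into $k+1$ residue classes modulo $k+1$, within which the windows are disjoint and Kolmogorov's strong law applies; a single limit in $n$ then finishes. The paper instead cuts $X \otimes Y$ into blocks of length $N$, counts only the occurrences lying entirely inside blocks (these block counts $B_i$ are independent across blocks because they read disjoint segments of $Y$), groups the blocks into buckets according to their $X$-content $x \in A^N$, applies the law of large numbers within each bucket to get a limit $\left(\sum_{x} \mu(x)\,\nbocc(u,x)\right)\nu(v)$ per block, and only then lets $N \to \infty$, which simultaneously kills the boundary-straddling occurrences (at most $|w|$ per block boundary, a relative error $O(|w|/N)$) and makes the per-letter block limit converge to $\mu(u)\nu(v)$. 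What your version buys: it avoids the double limit entirely, and it uses normality of $X$ only through the ordinary frequency of the single word $w_A$, whereas the paper's bucket argument implicitly relies on the \emph{block} frequencies of length-$N$ words in $X$ (i.e., on the equivalence of normality and block normality, which the paper must justify separately in Section~\ref{sec:blocknormality}). What the paper's version buys: by aggregating into block counts it never has to reason about which individual windows overlap, pushing all overlap combinatorics into one crude $O(|w|/N)$ term. One point that deserves an explicit sentence if you write your argument up: the weights $|I_r \cap [0,n]| / |I \cap [0,n]|$ of the residue classes need not converge as $n \to \infty$, but since they are nonnegative, sum to $1$, and every infinite class's average tends to the same limit $\nu(w_B)$ while finite classes carry vanishing weight, the recombined average still converges to $\nu(w_B)$; this is precisely the step you describe as handled trivially, and it is, but only because all class limits coincide.
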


\begin{proof}
For this proof, it is easier to use block normality. Fix a word $w=u \otimes v$ of $(A \times B)^N$ for some~$N$. We split $X\otimes Y$ into blocks of length $N$ :
	\[
	X\otimes Y = (X_1\otimes Y_1) \cdot (X_2\otimes Y_2) \cdot  \ldots
	\]
	with $|X_i|=|Y_i|=N$, for all $i \geq 1$.\\
	
By block-normality of~$X$, 
\[
\# \{i \leq n \mid X_i = u\} = \mu(u) n + o(n)
\]	
For each~$i$ such that $X_i = u$, the probability of~$Y_i$ to be $v$ is $\mu(v)$, independently of all other~$Y_j$. Thus, by the law of large numbers, with probability~$1$ over~$Y$: 

\begin{align*}
\# \{i \leq n \mid X_i = u ~ \text{and}~ Y_i = v \} & = \left(\mu(u) n + o(n)\right) \nu(v) + o(n) \\
 & = \mu(u)\nu(v)n + o(n) \\
 & = (\mu \otimes \nu)(u \otimes v)n + o(n)
\end{align*}

And thus, with probability~$1$ over~$Y$, the block $u \otimes v$ appears in $X \otimes Y$ with its expected frequency. This being true for any $u \otimes v$, this shows that $X \otimes Y$ is block normal with probability~$1$ over~$Y$. 
\end{proof}

\begin{lemma}\label{lem:projection}
	If a sequence $Z$ is $\mu \otimes \nu$-normal over $A \times B$, then $\pi_0(Z)$ and $\pi_1(Z)$ are $\mu$-normal and $\nu$-normal respectively. 
\end{lemma}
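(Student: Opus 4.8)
The plan is to express occurrence counts of a word in a projection as a finite sum of occurrence counts of product words in $Z$, and then pass to the limit using $\mu \otimes \nu$-normality. I will treat $\pi_0(Z)$ in detail; the argument for $\pi_1(Z)$ is entirely symmetric (interchanging the roles of $A,\mu$ and $B,\nu$), so it suffices to handle one projection.

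First I would fix a word $u \in A^*$ of length $k$ and write $X = \pi_0(Z)$. The central observation is that, at any position $i$, one has $X[i,i+k-1] = u$ exactly when $Z[i,i+k-1] = u \otimes v$ for the unique word $v = \pi_1(Z[i,i+k-1]) \in B^k$. This gives the exact decomposition
\[
\nbocc(u, X[0..n]) = \sum_{v \in B^k} \nbocc(u \otimes v, Z[0..n]),
\]
and since every product word $u \otimes v$ has the same length $k$ as $u$, dividing both sides by $n-k+1$ yields the corresponding identity for frequencies.

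Next I would take $n \to \infty$. The right-hand sum is finite, with $|B|^k$ terms, and $\mu \otimes \nu$-normality of $Z$ makes each summand $\freq(u \otimes v, Z[0..n])$ converge to $(\mu \otimes \nu)(u \otimes v) = \mu(u)\nu(v)$. Interchanging the limit with this finite sum then shows that $\freq(u,X)$ is well defined and equals
\[
\sum_{v \in B^k} \mu(u)\nu(v) = \mu(u) \sum_{v \in B^k} \nu(v) = \mu(u),
\]
the last step using that $\nu$ is a probability measure and hence assigns total mass $1$ to $B^k$. Since $u$ is arbitrary, $X$ is $\mu$-normal.

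Honestly, there is no substantial obstacle in this lemma: it is a bijective bookkeeping of occurrences followed by a trivial interchange of a limit with a finite sum. The only points requiring any care are that the decomposition is over disjoint events, so each occurrence of $u$ in $X$ determines a single $v$ and nothing is double counted, and that the marginal normalization $\sum_{v \in B^k} \nu(v) = 1$ is precisely what collapses the finite sum back to $\mu(u)$.
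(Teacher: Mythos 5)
Your proof is correct and follows essentially the same route as the paper: both decompose occurrences of $u$ in the projection as a sum over $v \in B^k$ of occurrences of $u \otimes v$ in $Z$, then use $\mu \otimes \nu$-normality and $\sum_{v \in B^k}\nu(v)=1$ to collapse the finite sum to $\mu(u)$. If anything, your version is marginally tidier, since you state the exact identity at the level of $\nbocc$ counts before passing to the limit, whereas the paper writes the identity directly for $\freq^+$ and $\freq^-$.
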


\begin{proof}
	Suppose $Z \in (A \times B)^\omega$ is $\mu \otimes \nu$-normal. We only need to show that $\pi_0(Z)$ is $\mu$-normal, the proof of the $\nu$-normality of $\pi_1(Z)$ is the same by symmetry. Let $w \in A^n$. We have for every~$n$,
	\[
	\freq(w,\pi_0(Z)[0,n])  =  \sum_{w' \in B^n} \freq(w \otimes w',Z[0,n])
	\]
	which, by $(\mu \otimes \nu)$-normality of $Z$, implies 
	\[
	\freq(w,\pi_0(Z)[0,n])  =  \sum_{w' \in B^n} (\mu \otimes \nu)(w \otimes w') + o(1) =  \sum_{w' \in B^n} \mu(w)\nu(w') +o(1) = \mu(w) +o(1).
	\]
And thus, 	taking the limit, we get $\freq(w,\pi_0(Z)) = \mu(w)$, as wanted. 
\end{proof}

We now have all the necessary tools to prove Theorem~\ref{thm:agafonov-pfa}. 

\begin{proof}[Proof of Theorem~\ref{thm:agafonov-pfa}]
	Since deterministic selectors (for which we already have Agafonov's theorem) are a subset of the probabilistic ones, all that is left to prove is $(i) \Rightarrow (iii)$. 
	
	Let $\S=(Q,A,q_I,\delta,S)$ be a probabilistic selector. Recall that each transition $\delta(q,a)$ (where $q \in Q$ and $a \in A$) is some probability distribution over $Q$. Consider the set $\mathcal{T}$ of all functions $Q \times A \rightarrow Q$. We can put a distribution $\tau$ over~$\mathcal{T}$ such that if $t$ is chosen according to $\tau$, for every $(q,a)$, the marginal distribution of $t(q,a)$ is $\delta(q,a)$. An easy way to do this is to take $\tau= \bigotimes_{(q,a)\in Q \times A} \delta(q,a)$.
	
	This construction means that, for a fixed sequence~$X \in A^\omega$, an equivalent way to simulate the probabilistic run of $\S$ on~$X$ is, every time we are in a state~$q$ and read a letter~$a$, to pick $t$ at random according to $\tau$ and move to state $t(q,a)$. But $\mathcal{T}$ is a finite set and $\tau$ a Bernoulli measure over it. This Bernoulli measure might not be positive. If it is not, we simply remove from $\mathcal{T}$ all functions whose $\tau$-probability is~$0$. Reformulating slightly, yet another equivalent way to simulate the run of~$\S$ over~$X$ is to do the following:
	\begin{itemize}
		\item[1.] First choose $T \in \mathcal{T}^\omega$ at random with respect to the Bernoulli measure $\tau$. 
		\item[2.] Then run on the sequence $X \otimes T$ the deterministic selector $\hat{\mathcal{S}}$ whose set of states is~$Q$ and transition $\hat{\delta}$ is defined by $\hat{\delta}(q,(a,t))=t(q,a)$. 
	\end{itemize}
	
	Now, the two following random variables have the same distribution: 
	\begin{itemize}
		\item The subsequence $Y$ of $X$ selected by~$\S$.
		\item The sequence $\pi_0(\hat{Y})$, where $\hat{Y}$ is the subsequence of $X \otimes T$ selected by $\hat{\S}$ when $T$ is chosen randomly according to $\tau$. 
	\end{itemize}
	
	Since $X$ is $\mu$-normal, by Lemma~\ref{lem:random-join}, $X \otimes T$ is $\mu \otimes \tau$-normal $\tau$-almost surely. Thus, by Agafonov's theorem for deterministic selectors and Bernoulli measures (Theorem~\ref{thm:agafonov-bernoulli}), almost surely, the subsequence $\hat{Y}$ selected by $\hat{\S}$ is $\mu \otimes \tau$-normal. Finally, by Lemma~\ref{lem:projection}, this implies that almost surely, the subsequence $\pi_0(\hat{Y})$ of $X$ is $\mu$-normal. This concludes the proof. 
\end{proof}

\subsection{Second proof: direct analysis}

We now provide a second proof of Theorem~\ref{thm:agafonov-pfa},
along the lines of Carton's techniques~\cite{Carton2020},
with a different presentation based on the notion of \emph{balance-inducing input words}. Let $\S$ be a probabilistic selector with set of states~$Q$, selecting states~$S \subseteq Q$ and probabilistic transition function~$\delta$. 

\subsubsection{The strongly connected case} \label{subsubsec:scpc}  

For the time being, we assume that 
\[
\text{$\S$ is strongly connected},
\]
i.e., for all states $q,r \in Q$,
there is a computation in $\A$ from $q$ to $r$ (i.e., for some~$w$, the event ``$\delta^*(q,w)=r$" has positive probability).
We also assume that $\S$ has at least one selecting state, i.e., $S \neq \emptyset$.

%
%

For now, suppose that instead of having a fixed~$\mu$-normal~$X$, we draw~$X \in A^\omega$ at random according to the probability measure~$\mu$. This turns the sequence $q_1, q_2, \ldots$ of states visited into a Markov chain~$\M$. By the ergodic theorem for Markov chains, with probability~$1$, during this process, each state $q \in Q$ is visited with asymptotic frequency $\pi_q >0$. Moreover, each time we land on a state~$q \in S$, a letter will be output in the selected subsequence at the next step. Thus, with probability 1, after having read~$n$ letters of input~$X$, we will asymptotically obtain $\lambda n +o(n)$ letters of the output~$Y$, where $\lambda  = \sum_{q \in S} \pi_q>0$. And this is true no matter what the starting state is.   

Furthermore, observe that a decision to select a letter~$X(n)$ in the subsequence~$Y$ only depends on the values of $X(0), \ldots X(n-1)$ and the internal randomness of the PFA, which are independent of the random variable~$X(n)$. This means in particular that each value of $Y(k)$ is independent of $Y(0), \ldots Y(k-1)$ and has distribution~$\mu$. Therefore, the subsequence~$Y$ is infinite with probability~$1$ and distributed according to~$\mu$. A fortiori, $Y$ will be $\mu$-normal with probability~$1$.\\

The process described above uses two sources of randomness: the sequence~$X$ and the internal randomness, call it $R$, of the PFA. What we have argued is that the set $\mathcal{B}$ of pairs $(X,R)$ such that, for any starting state~$q_0$: 
\begin{itemize}
\item the output~$Y$ is $\mu$-normal and 
\item $(\lambda+o(1)) n$ letters are output when $n$ bits of~$X$ are read
\end{itemize}
has measure~$1$. \\

By Fubini's theorem, for almost all~$X$, the projection $\mathcal{B}_X$ of the set~$\mathcal{B}$ over its first coordinate~$X$ has measure~$1$. Rephrasing this result, unfolding its many logical quantifiers: For almost all~$X$, for every word~$w \in A^*$, for every~$\varepsilon>0$, there exists an~$N$ such that,  for any $n \geq N$, for any starting state~$q_0$, with probability at least $(1-\varepsilon)$ (over~$R$), at least $(1-\varepsilon)\lambda n$ bits are output after reading~$n$ bits of~$X$, and the word~$w$ appears with frequency at least $(1-\varepsilon)\mu(w)$ in this output.  

The integer~$N$ depends on~$X$ (as well as $w$ and $\varepsilon$) but if we fix $\delta>0$, we obtain as a corollary of the above that there is some~$N=N(w, \varepsilon, \delta)$ such that for a measure at least $(1-\delta)$ of $X's$, for any starting state~$q_0$, with probability at least $(1-\varepsilon)$ (over~$R$), at least $(1-\varepsilon)\lambda N$ bits are output after reading the first~$N$ bits of~$X$, and the word~$w$ appears at least $(1-\varepsilon)\mu(w)L$ times, where~$L$ is the length of this output.  

In this last analysis, only the first~$N$ bits of~$X$ come into play. So what we have really proven is the following. 

\begin{lemma}
Let $\varepsilon, \delta > 0$ and $w \in A^*$ be fixed. There exists an~$N=N(w, \varepsilon, \delta)$ such that, if we call $G(w, \varepsilon, \delta)$ the set of strings $x \in A^N$ with the following properties:
\begin{itemize}
\item for any starting state~$q_0$, on input~$x$, with probability at least $(1-\varepsilon)$ (over the internal randomness of the automaton~$\mathcal{A}$), the number of output letters belongs to $[(1-\varepsilon)\lambda N, (1+\varepsilon)\lambda N]$,
\item the word~$w$ appears at least $(1-2\varepsilon)\mu(w)\lambda N$ times in this output
\end{itemize}
we then have that $G(w, \varepsilon, \delta)$ has $\mu$-measure at least $(1-\delta)$ (i.e., $\sum_{u \in G(w, \varepsilon, \delta)} \mu(u) \geq (1-\delta)$). 
\end{lemma}

Now, let us go back to the case where~$X$ is not random but is a fixed $\mu$-normal sequence and~$w$ a fixed word. We want to argue that on input~$X$, almost surely, the automaton~$\mathcal{A}$ outputs an infinite sequence~$Y$ in which~$w$ appears with frequency~$\mu(w)$. Fix $\varepsilon>0$ and $\delta>0$. Let us split~$X$ into blocks of length~$N=N(w, \varepsilon, \delta)$, say $X=x_1 x_2 x_3 \ldots$ with $|x_i|=N$ for all~$i$. Let us call $u_i$ the word consisting of the letters output by $\mathcal{A}$ while reading the block~$x_i$ (so $u_i$ is a random variable). 

By $\mu$-normality and the previous lemma, an asymptotic fraction at least $(1-\delta)$ of the $x_i$'s belong to $G(w, \varepsilon, \delta)$. In other words, most (a fraction at least $1-\delta$) of blocks of~$X$ are `good', in the sense that they cause, with high probability (at least $1-\varepsilon$), the word~$u_i$ to have length in $[(1-\varepsilon)\lambda N, (1+\varepsilon)\lambda N]$ with $w$ appearing at least $(1-2\varepsilon)\mu(w)\lambda N$ in $u_i$.

In order to apply this to the infinite run of the automaton~$\mathcal{A}$ on input~$X$, we need the following elementary lemma.

\begin{lemma}
Let $(R_n)_{i \in \mathbb{N}}$ be a sequence of random variables over some discrete space $\mathcal{R}$. Let $(E_n)_{i \in \mathbb{N}}$ be a sequence of events and $\alpha >0$ such that:
\begin{itemize}
\item For all~$n$, $E_n$ is $(R_1, \ldots, R_n)$-measurable. (That is, whether $E_n$ holds or not is fully determined by the values of $(R_1, \ldots, R_n)$). 
\item For every $(r_1, \ldots, r_{n-1}) \in \mathcal{R}^{n-1}$, we have that $\mathbb{P}(E_n \mid R_1=r_1, \ldots, R_{n-1}=r_{n-1}) \geq \alpha$ (provided this expression is well-defined, i.e., when the condition has positive probability). 
\end{itemize}

For all~$n$, let $S_n = \sum_{i=1}^n \mathbf{1}_{E_i}$. Under the above hypotheses we have, almost surely, $\liminf_{n \rightarrow +\infty} \frac{S_n}{n} \geq \alpha$. 
\end{lemma}

\begin{proof}
The hypotheses directly imply that the sequence of random variables $S'_n = S_n - \alpha n$ is a submartingale with respect to the sequence $(R_n)_{n \in \mathbb{N}}$. Moreover, by definition, we have $|S'_{n+1}-S'_n| \leq 1$ for all~$n$. Thus, we can apply Azuma's inequality for submartingales (see for example~\cite{AlonS2000}) and get
\[
\mathbb{P}(S'_n \leq - n^{3/4}) \leq e^{-2 \sqrt{n}}
\]
Since the series $\sum_n e^{-2 \sqrt{n}}$ is convergent, by the Borel-Cantelli lemma, we must have, with probability~$1$, $S'_n > - n^{3/4}$ for almost all~$n$. In other words, $S_n > \alpha n - n^{3/4}$ for almost all~$n$, and the lemma follows. 
\end{proof}

We will use this lemma by taking for $R_n$ the $N$-uple of states visited by the automaton~$\A$ on input~$X$ while reading the block $x_n$ and for $E_n$ the event: ``either $x_n \notin G(w, \varepsilon, \delta)$, or the length of ~$u_n$ is in the interval $[(1-\varepsilon)\lambda N, (1+\varepsilon)\lambda N]$, and the word~$w$ appears at least $(1-2\varepsilon)\mu(w)\lambda N$ times in $u_n$". The event~$E_n$ is indeed $(R_1, \ldots, R_n)$-measurable since knowing $(R_1, \ldots, R_n)$ allows one to reconstruct the entire run of the automaton on the prefix $x_1x_2 \ldots x_n$ of~$X$. Moreover, by definition of $G(w, \varepsilon, \delta)$, we have that for all possible values $(r_1, \ldots, r_{n-1})$ of $(R_1, \ldots, R_n)$, $\mathbb{P}(E_n \mid R_1=r_1, \ldots, R_{n-1}=r_{n-1}) \geq 1-\varepsilon$. 

Applying the above lemma, if we consider the run of the automaton $\mathcal{A}$ on~$X$ after reading $x_1 \ldots x_n$:
\begin{itemize}
\item The number of $x_i'$'s that are not in $G(w, \varepsilon, \delta)$ is at most $(\delta+o(1))n$, by normality of~$X$.
\item By the above lemma, with probability~$1$, a fraction at least $1-\varepsilon-o(1)$ of events $\{E_i \, : \, i \leq n\}$ happen and thus, for a fraction at least $(1-\delta-\varepsilon-o(1))$ of indices $i \leq n$, the word $u_i$ has a length in $[(1-\varepsilon)\lambda N, (1+\varepsilon)\lambda N]$ and the word~$w$ appears at least $(1-2\varepsilon)\mu(w)\lambda N$ times in $u_i$ (for other~$i$, we will only use that the length of $u_i$ is at most~$N$)
\end{itemize}

Putting this all together, with probability~$1$, we have
\[
|u_1 \ldots u_n| \leq (1-\delta-\varepsilon)(1+\varepsilon)\lambda Nn + (\delta+\varepsilon)Nn +o(n)
\]
and 
\[
\nbocc(w,u_1 \ldots u_n) \geq (1-\delta-\varepsilon)(1-2\varepsilon)  \mu(w) \lambda N n +o(n)
\]
But since $\delta$ and $\varepsilon$ can be taken arbitrarily small, this means that with probability~$1$, 
\[
\freq(w,u_1 \ldots u_n) \geq \mu(w)  +o(n)
\]
And thus, 
\[
\freq^-(w,Y) \geq \mu(w)
\]
Note that~$w$ was arbitrary hence we have established this for any word~$w$. Now, fix again a word~$w$ and observe that 
\[
\freq(w,u_1 \ldots u_n) = 1- \sum_{|w'|=|w| \atop w' \not= w}  \freq(w',u_1 \ldots u_n) 
\]
Which implies 
\[
\freq^+(w,Y) \leq 1 -  \sum_{|w'|=|w| \atop w' \not= w}  \freq^-(w',Y) \leq 1 - \sum_{|w'|=|w| \atop w' \not= w}  \mu(w') = \mu(w)
\]
Thus the inf-frequency and sup-frequency of~$w$ inside~$Y$ coincide, and we have finally obtained that, with probability~$1$:
\[
\freq(w,Y) = \mu(w)
\]
which finishes the proof of the strongly connected case.

\subsubsection{The general case}

Now, we get to the general case where the automaton may have a non-empty set of transient states~$T$.

By definition, from every state in $T$ there is a word and a computation
on this word which leaves $T$ and enters a bottom strongly connected component (i.e., a strongly connected component with no outgoing edge to other strongly connected components),
from which there is no way back to $T$.
We show
\[
\PP_{\A,z}(\exists n , Q_n \not \in T) = 1\enspace.
\]
From every state $q\in T$,
there is at least one word of length $\leq |T|$ which 
exits $T$ 
with nonzero probability,
bounded from below by ${p_m}^{|T|}$ where $p_m$
is the smallest non-zero probability appearing in $p$.
We can concatenate $|T|$ such words in order to obtain a single word
$u$ which guarantees probability $\geq {p_m}^{|T|^2}$ to leave $T$,
 \emph{whatever the initial state}. 
Since the input word~$z$ is normal, $u$ appears infinitely often in $z$,
thus $T$ is eventually left almost-surely.

If the bottom strongly connected component reached after leaving $T$ contains no selecting state, then no further letters are selected after this point, and hence the selected output is finite. Otherwise, this bottom component contains at least one selecting state, and the suffix of the computation restricted to this component falls under the strongly connected case treated above. Since normality is a tail property, the corresponding suffix of the input word is normal; therefore, by the strongly connected case, almost surely the corresponding suffix of the selected output is normal. It follows again by tail-invariance of normality that the whole selected output is normal.

\section{The Schnorr-Stimm theorem for probabilistic gamblers}

In this last section, we establish the Schnorr-Stimm theorem for probabilistic finite-state automata. The definition of probabilistic ($\mu$-)gambler is the same as the definition of deterministic ($\mu$-)gambler except that it is based on PFAs instead of DFAs. In this setting, the quantities $\capital(\G,w)$ become random variables.

\begin{theorem}[Schnorr-Stimm theorem for PFAs]
	For $X \in A^\omega$ and $\mu$ a Bernoulli measure, the following are equivalent.
	\item[(i)] $X$ is $\mu$-normal. 
	\item[(ii)] Any probabilistic $\mu$-gambler $\G$ betting on~$X$ loses almost surely.   
\end{theorem}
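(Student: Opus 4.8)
The plan is to mirror the derandomization argument used in the proof of Agafonov's theorem for PFAs (Theorem~\ref{thm:agafonov-pfa}), reducing the probabilistic gambler to a deterministic one running against a Bernoulli-random auxiliary input. The direction $(ii) \Rightarrow (i)$ is immediate: deterministic $\mu$-gamblers are precisely the probabilistic ones whose transition distributions are point masses, so if $X$ fails to be $\mu$-normal, the deterministic gambler constructed in the proof of Theorem~\ref{thm:schnorr-stimm-bernoulli} already wins on $X$, which (viewed as a probabilistic gambler) contradicts $(ii)$. Thus the work lies entirely in $(i) \Rightarrow (ii)$.

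For $(i) \Rightarrow (ii)$, let $\G = (Q,A,q_I,\delta,\gamma)$ be a probabilistic $\mu$-gambler and assume $X$ is $\mu$-normal. As in the proof of Theorem~\ref{thm:agafonov-pfa}, I would take $\mathcal{T}$ to be the finite set of functions $Q \times A \to Q$ and equip it with the Bernoulli measure $\tau = \bigotimes_{(q,a)} \delta(q,a)$, discarding any function of $\tau$-probability zero so that $\tau$ is positive. Running $\G$ on $X$ is then equivalent to drawing $T \in \mathcal{T}^\omega$ according to $\tau$ and running on $X \otimes T$ the \emph{deterministic} gambler $\hat{\G}$ with state set $Q$, transition $\hat{\delta}(q,(a,t)) = t(q,a)$, and betting function $\hat{\gamma}(q,(a,t)) = \gamma(q,a)$. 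The crucial point is that $\hat{\gamma}$ ignores the auxiliary coordinate $t$ and bets only according to the genuine letter $a$.

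I would then check two things. First, $\hat{\G}$ is a legitimate $\mu \otimes \tau$-gambler: for each $q$, $\sum_{(a,t)} \mu(a)\tau(t)\hat{\gamma}(q,(a,t)) = \big(\sum_{a} \mu(a)\gamma(q,a)\big)\big(\sum_{t} \tau(t)\big) = 1$ by the fairness of $\G$. Second, under the coupling through $T$ the capital trajectories coincide exactly: at each stage the multiplier applied by $\hat{\G}$ in state $q$ reading $(X(n),T(n))$ is $\gamma(q,X(n))$, the very multiplier $\G$ would apply, so $\capital(\hat{\G},(X \otimes T)[0..n]) = \capital(\G,X[0..n])$ for all $n$. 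Consequently $\G$ loses on $X$ if and only if $\hat{\G}$ loses on $X \otimes T$, as processes with matching distributions.

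To finish, observe that by Lemma~\ref{lem:random-join}, since $X$ is $\mu$-normal, $X \otimes T$ is $\mu \otimes \tau$-normal for $\tau$-almost every $T$; the deterministic Schnorr-Stimm theorem for Bernoulli measures (Theorem~\ref{thm:schnorr-stimm-bernoulli}) then guarantees that $\hat{\G}$ does not win against $X \otimes T$. By the capital correspondence, $\G$ loses on $X$ almost surely, which is $(ii)$. The only delicate step is this capital correspondence: unlike in the Agafonov reduction, no analogue of Lemma~\ref{lem:projection} is needed, since the capital depends only on the real letters (the first projection); one must simply ensure that the auxiliary randomness enters solely through the transitions and never through the bets, which is exactly what the definition $\hat{\gamma}(q,(a,t))=\gamma(q,a)$ enforces.
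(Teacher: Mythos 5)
Your proposal is correct and follows essentially the same route as the paper: the same derandomization via $\mathcal{T}$, $\tau$, and the deterministic gambler $\hat{\G}$ with $\hat{\gamma}(q,(a,t))=\gamma(q,a)$, the same fairness check, and the same appeal to Lemma~\ref{lem:random-join} and Theorem~\ref{thm:schnorr-stimm-bernoulli}. The only cosmetic difference is that you phrase the capital correspondence as a pathwise equality under the coupling through $T$, whereas the paper states it as equality in distribution of $\capital(\G,X[0..n])$ and $\capital(\hat{\G},(X\otimes T)[0..n])$; these are the same observation, and your remark that no analogue of Lemma~\ref{lem:projection} is needed matches the paper as well.
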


\begin{proof}
	We already have $(ii) \Rightarrow (i)$ by Theorem~\ref{thm:schnorr-stimm-bernoulli} (deterministic gamblers being a subset of probabilistic ones). 
	
	For $(i) \Rightarrow (ii)$, the idea is similar to the proof of Theorem~\ref{thm:agafonov-pfa}. To simulate the run of a probabilistic $\mu$-gambler $\G=(Q,A,q_I,\delta,\gamma)$ on a sequence~$X$, one can equivalently run on the sequence $X \otimes T$ (where $T \in \mathcal{T}^\omega$ is $\tau$-random sequence)\footnote{$\mathcal{T}$ and $\tau$ are defined in the same way as in Theorem~\ref{thm:agafonov-pfa}.} the deterministic gambler $\hat{\G}=(Q,A,q_I,\hat{\delta},\hat{\gamma})$ where again $\hat{\delta}(q,(a,t))=t(q,a)$ for all~$(q,a,t) \in Q \times A \times \mathcal{T}$ and $\hat{\gamma}(q,(a,t)) = \gamma(q,a)$ (indeed the bet placed by a gambler at a given stage does not take into account which state will be reached next). Note that the fairness condition is respected since for every~$q$,
	\begin{eqnarray*}
		\sum_{(a,t) \in A \times \mathcal{T}} (\mu \otimes \tau)(a,t) \cdot \hat{\gamma}(q,(a,t))  & = & \sum_{a \in A, t \in \mathcal{T}} \mu(a) \tau(t) \gamma(q,a) \\
		& = & \sum_{t \in \mathcal{T}} \tau(t) \sum_{a \in A} \mu(a) \gamma(q,a)\\
		& = & \sum_{t \in \mathcal{T}} \tau(t) \\
		& = & 1
	\end{eqnarray*}
	(the second-to-last equality holds by fairness condition on~$\gamma$ and the last one because $\tau$ is a distribution).
	
	In this way, the two following random variables will have the same distribution:
	\begin{itemize}
		\item $\capital(\G,X[0,n])$.
		\item $\capital(\hat{\G},(X \otimes T)[0,n])$ where $T \in \mathcal{T}^\omega$ is chosen at random according to $\tau$. 
	\end{itemize} 
	
	Since $X$ is $\mu$-normal, by Lemma~\ref{lem:random-join} again, $X \otimes T$ is $\mu \otimes \tau$-normal $\tau$-almost surely. Thus, by Theorem~\ref{thm:schnorr-stimm-bernoulli}, for $\tau$-almost all~$T$, $\capital(\hat{\G},(X \otimes T)[0,n])$ is bounded by a constant~$C$ independent on~$n$. By the equivalence of the two random variables above, this means that almost surely, there exists a constant~$C$ such that $\capital(\G,X[0,n])<C$ for all~$n$. In other words, almost surely, $\G$ loses on the sequence~$X$. 
\end{proof}

We remark that the dichotomy theorem for Bernoulli measures yields the following dichotomy for probabilistic $\mu$-gamblers.

\begin{theorem}[Schnorr-Stimm dichotomy theorem for PFAs]
	\label{thm:probabilisticdichotomy}
	Let $X$ be an infinite sequence in $A^\omega$ and $\mu$ a Bernoulli measure. \\
	(i) If $X$ is $\mu$-normal and $\G$ is a $\mu$-gambler, then almost surely the capital of $\G$ throughout the game either is ultimately constant or decreases at an exponential rate.\\
	(ii) If $X$ is not $\mu$-normal, then there exists a $\mu$-gambler~$\G$ which wins against~$X$ at an `infinitely often' exponential rate almost surely.
\end{theorem}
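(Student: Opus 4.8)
The plan is to reuse, almost verbatim, the coupling argument developed in the proof of the Schnorr--Stimm theorem for PFAs, and then invoke the deterministic dichotomy (Theorem~\ref{thm:schnorr-stimm-dicho-bernoulli}) on the simulating sequence. I would treat the two parts separately, and part~(ii) turns out to be essentially free: deterministic $\mu$-gamblers are the special case of probabilistic ones in which each transition distribution is a point mass. So if $X$ is not $\mu$-normal, Theorem~\ref{thm:schnorr-stimm-dicho-bernoulli}(ii) already produces a deterministic $\mu$-gambler $\G$ winning against $X$ at an infinitely-often exponential rate. A deterministic gambler carries no internal randomness, so its single capital trajectory is deterministic and the winning event holds with probability $1$ trivially. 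Hence the same $\G$ witnesses part~(ii) in the probabilistic setting.

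For part~(i) I would set up the same reduction as in the PFA Schnorr--Stimm theorem: replace the probabilistic $\mu$-gambler $\G=(Q,A,q_I,\delta,\gamma)$ by the deterministic $\mu\otimes\tau$-gambler $\hat{\G}=(Q,A,q_I,\hat{\delta},\hat{\gamma})$ with $\hat{\delta}(q,(a,t))=t(q,a)$ and $\hat{\gamma}(q,(a,t))=\gamma(q,a)$, where $\mathcal{T}$ and $\tau$ are defined exactly as there (pruning the functions of $\tau$-mass $0$ so that $\tau$, and hence $\mu\otimes\tau$, is a \emph{positive} Bernoulli measure, which is what Theorem~\ref{thm:schnorr-stimm-dicho-bernoulli} requires). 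The crucial observation is that this reduction is a genuine coupling at the level of \emph{entire} trajectories: for a fixed $X$ and a fixed draw of $T\in\mathcal{T}^\omega$, the whole capital process $(\capital(\G,X[0..n]))_n$ coincides with $(\capital(\hat{\G},(X\otimes T)[0..n]))_n$. Consequently any trajectory-level event, in particular ``ultimately constant or decreasing at an exponential rate,'' has the same probability under the run of $\G$ on $X$ as under the random choice of $T$.

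It then remains to analyse $\hat{\G}$ on $X\otimes T$. Since $X$ is $\mu$-normal, Lemma~\ref{lem:random-join} gives that $X\otimes T$ is $\mu\otimes\tau$-normal for $\tau$-almost every $T$. Applying Theorem~\ref{thm:schnorr-stimm-dicho-bernoulli}(i) to the deterministic $\mu\otimes\tau$-gambler $\hat{\G}$ on each such $\mu\otimes\tau$-normal sequence, I conclude that for $\tau$-almost all $T$ the capital of $\hat{\G}$ is either ultimately constant or decays at an exponential rate. Transferring through the coupling of the previous paragraph then yields the claim for $\G$ on $X$.

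I expect the only genuinely delicate point to be the justification that the transfer is valid at the level of the whole path and not merely marginal-by-marginal: the two alternatives in the dichotomy are asymptotic events on the capital trajectory, so one must be explicit that the simulation couples the two processes \emph{pathwise} (which it does, since a single draw of $T$ determines the entire run). A secondary bookkeeping point is checking positivity of $\mu\otimes\tau$ after pruning, so that Theorem~\ref{thm:schnorr-stimm-dicho-bernoulli} genuinely applies. Finally, I would note that whereas the deterministic dichotomy pins down the decay rate $\sum_r \pi_r \alpha_r$ via the visit frequencies $\pi_r$ of betting states supplied by Lemma~\ref{lem:ergodiconnormal}, here that rate becomes $T$-dependent; but since only the qualitative alternative (ultimately constant versus exponential decay) is asserted, no uniform control of the rate across $T$ is needed.
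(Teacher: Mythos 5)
Your proposal is correct and follows exactly the route the paper intends: the paper derives this dichotomy as a remark from Theorem~\ref{thm:schnorr-stimm-dicho-bernoulli}, via the same $X \otimes T$ pathwise coupling with the deterministic $\mu\otimes\tau$-gambler $\hat{\G}$ (plus Lemma~\ref{lem:random-join}) for part~(i), and the observation that the deterministic witness from the Bernoulli dichotomy serves trivially for part~(ii). Your explicit attention to the pathwise nature of the coupling and to the positivity of $\tau$ after pruning makes precise exactly the points the paper leaves implicit.
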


\section{Conclusion}
The main contributions of this paper are a generalization
of the Agafonov theorem for PFA with arbitrary transition probabilities which settles the open question posed by L\'echine et al.~\cite{LechineSS2024}, and an extension of Schnorr-Stimm theorem to probabilistic gamblers.  

While we proved the probabilistic Agafonov theorem (Theorem~\ref{thm:agafonov-pfa}) by reduction to the deterministic setting, it is also possible to follow with a more direct approach (i.e., without appealing to the Seiller-Simonsen result), similar to the one followed by Carton to generalize Agafonov's theorem for DFA~\cite{Carton2020}. This however makes the argument somewhat more complicated. 

An interesting direction for future research is to explore whether the `uselessness of randomness' also holds for pushdown automata. These are a more powerful model of computation and indeed some normal sequences can be predicted by pushdown automata (some in a rather dramatic way, as proven by Carton and Perifel~\cite{CartonP2024}\footnote{The result proven by Carton and Perifel considers a slightly different paradigm, namely compression (which we did not discuss in this paper) instead of prediction.}). We can for example ask: If some probabilistic pushdown selector selects a biased subsequence from a sequence~$X$, does there necessarily exist a deterministic pushdown selector which also selects a biased subsequence? Similarly, if some probabilistic pushdown gambler wins against a sequence~$X$, does there necessarily exist a deterministic pushdown gambler which wins against that same sequence~$X$?

A related question concerns the speed of success in the gambling model. In the case of finite-state automata, Schnorr and Stimm proved that either a sequence~$X$ cannot be predicted or some gambler wins on it at an exponential rate. This dichotomy no longer holds for pushdown automata, but one may ask the following question: If some probabilistic pushdown gambler wins at an exponential rate against a sequence~$X$, does there necessarily exist a pushdown gambler which wins against that sequence~$X$ at an exponential rate?

\bibliographystyle{alphaurl}
\bibliography{biblio-agafonov-pfa}

\end{document}